\documentclass[3p,final]{elsarticle}

\usepackage{amssymb}
 \usepackage{amsthm}
\newtheorem{Theorem}{Theorem}[section]

\newtheorem{lemma}[Theorem]{Lemma}
\newtheorem{Corollary}[Theorem]{Corollary}

\usepackage{subfloat}
\usepackage{hyperref}
\newtheorem{Observation}[Theorem]{Observation}
\newtheorem{Claim}[Theorem]{Claim}
\usepackage{natbib}

\journal{arXiv}










\begin{document}

\begin{frontmatter}
\title{A local characterization for perfect plane near-triangulations}


\author[mymainaddress]{Sameera M Salam}
\ead{shemi.nazir@gmail.com}
\author[mysecondaryaddress]{Jasine Babu}
\ead{jasine@iitpkd.ac.in}
\author[mymainaddress]{K Murali Krishnan}
\ead{kmurali@nitc.ac.in}

\address[mymainaddress]{Department of Computer Science and Engineering, National Institute of Technology Calicut, Kerala, India 673601}
\address[mysecondaryaddress]{Department of Computer Science and Engineering, Indian Institute of Technology, Palakkad, Kerala, India 678557}

\begin{abstract}
We derive a local criterion for a plane near-triangulated graph
to be perfect.  It is shown that a plane near-triangulated graph is perfect if and only if it does not contain either a vertex, 
an edge or a triangle, the neighbourhood of which has an odd hole as its boundary. The characterization leads to an $O(n^{2})$ 
algorithm for checking perfectness of plane near-triangulations. 
\end{abstract}

\begin{keyword}
Plane near-triangulated graphs \sep Plane triangulated graphs \sep Perfect graphs.
\end{keyword}

\end{frontmatter}

\section{Introduction}
A plane embedding of a planar graph $G$ is said to be a plane near-triangulation if 
all its faces, except possibly the exterior face, are triangles.  
It was known even before the strong perfect graph theorem (\citet{chudnovsky2006strong}) 
that a planar graph is not perfect if and only if
it contains an induced odd hole (\citet{tucker1973strong}). 
Algorithmic recognition of planar perfect graphs was subsequently studied by \citet{hsu1987recognizing} who discovered a method to determine whether
a given planar graph of $n$ vertices is perfect in $O(n^{3})$ time.   Later, ~\citet{cornuejols2003polynomial} discovered an algorithm 
to recognize perfect graphs in $O(n^{9})$ time, using the strong perfect graph theorem. 
    
Though structural characterizations for perfect plane triangulations 
were attempted in the literature (see for example, \citet{benchetrit2015h}), 
a local characterization for perfect plane triangulations (or plane near-triangulations) does not seem to be known.  
An attempt in this direction was initiated by \citet{SalamCWKS19}.  In this work, we extend their results to obtain a local
characterization for a plane (near-) triangulated graph to be perfect. The characterization leads to an $O(n^{2})$ algorithm for checking perfectness
of a plane near-triangulation.  No quadratic time algorithm seems to be known
in the literature for testing perfectness of plane triangulations.  

If a plane near-triangulation $G$ contains a cut vertex or an edge separator, we can split $G$ into
two induced subgraphs such that $G$ is perfect if and only if each of the 
induced subgraphs is perfect.  Consequently, it suffices to consider  plane triangulations
that are both $2$-connected and have no edge separators.  

A triangle $\Delta$  in $G$ consisting of vertices $x,y,z$, is a separating triangle in
$G$ if the interior of $\Delta$  as well as the exterior of 
$\Delta$ contain at least one vertex.  Let $Int(\Delta)$ and 
$Ext(\Delta)$ denote the set of vertices in the interior and exterior of $\Delta$.
It is not hard to see that $G$ is perfect if and only if the subgraphs induced
by $Int(\Delta)\cup \{x,y,z\}$ and $Ext(\Delta)\cup \{x,y,z\}$ are perfect.  
Thus, a separating triangle in $G$ splits $G$ into two induced subgraphs such
that $G$ is perfect if and only if both the induced subgraphs are perfect.  
Consequently, we assume hereafter that $G$ does not contain any separating 
triangles as well.  

A \textit{W-triangulation} is a $2$-connected plane near-triangulation that does not
contain any edge separator or a separating triangle \cite{SalamCWKS19}. It is easy to see that the closed 
neighbourhood $N[x]$ of any internal vertex $x$ in a W-triangulation
$G$ will induce a wheel; for otherwise $G$ will contain a separating triangle. It was shown by \citet{SalamCWKS19}
that a plane W-triangulation
that does not contain any induced wheel on five vertices is not 
perfect if and only if it contains either a vertex or a face, the boundary of the exterior face of the closed 
neighbourhood of which, induces an odd hole.  However, their proof strategy was crucially dependent on the 
graph being free of induced wheels on five vertices.  

Let $X$ be a set of vertices in a graph $G$. The \textit{local neighbourhood} of $X$ is defined as the 
subgraph $G[N[X]]$ induced by $X$ and its neighbours in $G$.  Given a W-triangulation $G$, our objective is to show that if
$G$ is not perfect, then there exists a small connected induced subgraph $X$ in $G$ 
whose local neighbourhood has an odd hole as the boundary of its exterior face. 
We will show that $X$ will either be a vertex, 
an edge or a facial triangle of $G$.      

First, observe that if an internal vertex $x$ of $G$ has odd degree, then the local neighbourhood of $X=\{x\}$ is a wheel, 
whose exterior boundary is an odd hole.  Thus, the non-trivial graphs to consider
are W-triangulations in which all internal vertices are of even degree.  
An \textit{even W-triangulation} is a W-triangulation in which 
every internal vertex has even degree \cite{SalamCWKS19}. 

Figure~\ref{fig00} shows an even W-triangulation $G$
that is not perfect \cite{SalamCWKS19}.  Here, the local neighbourhood of the 
facial triangle consisting of vertices $x,y$ and $z$ has an odd hole as its exterior boundary, and hence
we can choose $X=\{x,y,z\}$.  Note that, for this particular graph,
no smaller substructure (a vertex or an edge) exists, whose local
neighbourhood has an odd hole as its exterior boundary.  

\begin{figure}[!htb]
\centering
 \includegraphics[scale=0.75]{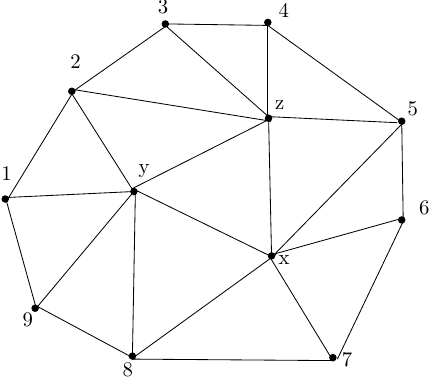}
\caption{The local neighbourhood of the facial triangle $\Delta=\{x,y,z\}$ has an odd hole in its boundary.}
\label{fig00}
\end{figure}

In Section~\ref{sec:perfect}, we prove that every non-perfect even W-triangulation $G$ contains a subset of vertices $X$, 
which consists of either the endpoints of an edge or a facial triangle, such that the induced subgraph $G[N[X]]$ has an odd hole as the boundary 
of its exterior face. This yields:  
 \begin{Theorem} \label{thm:main-thm}
 Let $G=(V,E)$ be a plane near-triangulated graph. $G$ is not perfect if and only if there exists a vertex, 
 an edge or a triangle, the exterior boundary of the local 
neighbourhood of which, is an odd hole.
  \end{Theorem}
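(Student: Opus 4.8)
The plan is to treat the two directions separately, with almost all of the work in the forward (necessity) direction. For the reverse direction, suppose a vertex, edge or triangle $X$ is such that the exterior boundary of its local neighbourhood $G[N[X]]$ is an odd hole $H$. Since $G[N[X]]$ is an induced subgraph of $G$ and $H$, being a hole, is chordless in $G[N[X]]$, the cycle $H$ is an induced odd cycle of $G$. As perfection is hereditary under induced subgraphs and an odd hole is not perfect, $G$ is not perfect. For the forward direction I would first invoke the reductions already set up: splitting at cut vertices, edge separators and separating triangles reduces to W-triangulations, and an internal vertex of odd degree immediately supplies the required \emph{vertex} (its closed neighbourhood is an odd wheel whose rim, chordless in the absence of separating triangles, is an odd hole). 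Thus it suffices to prove that every non-perfect even W-triangulation contains an edge or a facial triangle with the stated property, and I would argue this by contradiction, taking a plane near-triangulation $G$ with the fewest vertices that violates the theorem; by the reductions $G$ is a non-perfect even W-triangulation admitting no vertex, edge or triangle whose local neighbourhood has an odd-hole exterior boundary.

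The first key step is a local dichotomy: in a W-triangulation an induced $W_5$ is present if and only if some internal vertex has degree exactly four. Indeed, the rim of an internal vertex is a chordless cycle, since a rim chord would form a separating triangle with the hub; moreover the hub of any induced $W_5$ cannot be a boundary vertex (the neighbourhood of a boundary vertex induces a path, which contains no four-cycle), so it is internal, and an induced four-cycle among its neighbours can only be the entire rim, forcing degree four. Consequently, if $G$ had no degree-four internal vertex it would be $W_5$-free, and the theorem of Salam, Babu and Krishnan would already expose a vertex or a facial triangle of the required kind, contradicting minimality. Hence $G$ contains a degree-four internal vertex $x$ whose closed neighbourhood is a $W_5$ with rim $a,b,c,d$ and interior triangles $xab,xbc,xcd,xda$.

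The decisive step is a local surgery at $x$. I would delete $x$ and re-triangulate the resulting quadrilateral face $abcd$ by inserting exactly one diagonal, obtaining a plane near-triangulation $G'$ on one fewer vertex. Fixing an induced odd hole $C$ of $G$ (which exists by Tucker's theorem), the diagonal is chosen so as not to chord $C$; such a choice exists because $C$ is chordless and therefore cannot already contain the four-cycle $abcd$. With this choice $C$ survives as an induced odd hole of $G'$, so $G'$ is non-perfect, and by minimality the theorem holds for $G'$: it contains a vertex, edge or triangle $X'$ whose local neighbourhood in $G'$ has an odd hole as its exterior boundary. A parity count shows the surgery flips the degree parity of the two rim vertices off the diagonal, so $X'$ is often delivered concretely as an odd wheel at one of these vertices.

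The main obstacle, and the final step, is pulling $X'$ back from $G'$ to $G$. Since $G$ and $G'$ differ only inside the wheel at $x$, with the new diagonal of $G'$ replaced in $G$ by the two-edge path through $x$, an odd-hole boundary in $G'$ that runs along that diagonal corresponds in $G$ to a walk one longer, flipping its parity. Reconciling this is exactly where the \emph{edge} case of the characterization is forced: I expect that reinserting $x$ converts the substructure $X'$ into the endpoints of an edge (or a facial triangle) of $G$ whose local neighbourhood again exposes an odd hole on its exterior. Verifying that this reinsertion keeps the candidate boundary chordless and of odd length, and confirming that the surgery does not silently create a separating triangle or an edge separator (which would have to be peeled off before applying minimality), is the delicate bookkeeping at the heart of the argument.
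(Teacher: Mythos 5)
Your converse direction, the reduction to W-triangulations, the odd-degree-vertex case, and the dichotomy ``an induced wheel on five vertices exists iff some internal vertex has degree exactly four'' are all sound, and your overall strategy (induction on $|V|$ via a local surgery at a degree-four internal vertex, falling back on the $W_5$-free result of Salam et al.) is genuinely different from the paper, which never inducts: it fixes a \emph{minimal} odd hole $C$, proves the subgraph induced by $Int(C)$ is a $2$-connected near-triangulation with boundary cycle $C'$, and extracts the witness edge or triangle from a degree classification of the vertices of $C$ and $C'$. However, your surgery step has gaps that are not bookkeeping but the substance of the problem. The clearest one is that the inductive hypothesis applied to $G'$ can return a useless witness. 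Deleting $x$ and inserting the diagonal $ac$ makes the off-diagonal rim vertices $b$ and $d$ into odd-degree internal vertices of $G'$ (you note this yourself), so the hypothesis may simply hand back the vertex $b$; but the rim of the odd wheel at $b$ in $G'$ traverses the new edge $ac$, and in $G$ that rim corresponds to the \emph{even} rim of $b$ through $a\,x\,c$. This witness is a pure artifact of the surgery and pulls back to nothing. More generally, whenever the odd hole bounding $N_{G'}[X']$ uses the diagonal, the corresponding cycle in $G$ has even length, and you offer no mechanism for producing a different witness in $G$; ``I expect that reinserting $x$ converts $X'$ into an edge or facial triangle of $G$'' is precisely the statement that has to be proved.

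Two further unaddressed points would also have to be closed. First, you never handle the case where the deleted vertex $x$ lies on the chosen odd hole $C$: the two neighbours of $x$ on $C$ must then be opposite rim vertices (adjacent ones would force a chord of $C$), so deleting $x$ destroys $C$, shortcutting through the diagonal flips its parity, and rerouting through $b$ or $d$ need not give an induced cycle; you would need a separate argument that $G'$ is still non-perfect. Second, the surgery can create separating triangles in $G'$ (any common neighbour $w\notin\{b,d,x\}$ of $a$ and $c$ yields a triangle $acw$ enclosing $b$ or $d$), so $G'$ need not be a W-triangulation, and after decomposing $G'$ the relationship between the component containing the witness and the original graph $G$ must be re-established. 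Until the pullback, the $x\in V(C)$ case, and the post-surgery decomposition are actually carried out, this is a plausible program rather than a proof; the paper's single-graph analysis of a minimal odd hole avoids all three difficulties.
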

In Section~\ref{perfect plane}, we describe an $O(n^2)$ 
algorithm that uses Theorem~\ref{thm:main-thm} to check
whether a plane near-triangulation $G$ of $n$ vertices is perfect. 
\section{Perfect plane near-triangulations}\label{sec:perfect}
Let $G=(V,E)$ be an even non-perfect W-near-triangulated graph.
A \textit{minimal odd hole} $C$ in $G$ is defined as an odd hole 
such that there is no other odd hole in $C \cup Int(C)$. 
Let $C$ be a minimal odd hole in $G$ and let $S=\{a_0,a_1,\ldots,a_s\}$ be the set of vertices in $C$, listed in clockwise order. 
To avoid cumbersome notation, hereafter a reference to a vertex $a_i\in S$ for $i>s$ may be inferred as reference to 
the vertex $a_{i\bmod (s+1)}$. With this notation, we have   
$a_ia_{i+1}\in E(G)$, for $0\leq i \leq s$.  Since $C$ is an odd hole, $s$ must be even and $s\geq 4$. Throughout the paper, the length of a path (respectively, cycle)
will be the number of edges in the path (respectively, cycle).  

Let $H$ be the subgraph of $G$ induced by the vertices in $Int(C)$. We study the structure of $H$ in detail, in order to derive our perfectness characterization. Since $G$ is a plane near-triangulation, $H$ is non-empty.
We show next that $H$ is a $2$-connected plane near-triangulation. 
\begin{Claim}\label{claim:connect}
 $H$ has at least $3$ vertices. Moreover, for any vertex $b \in V(H)$, the neighbours of $b$ on $C$ (if any) are consecutive vertices of $C$. 
\end{Claim}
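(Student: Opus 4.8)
The plan is to work throughout inside the closed disk bounded by $C$, i.e.\ with $\bar H := G[V(C)\cup Int(C)]$, and to exploit three facts: (i) no edge of $G$ crosses $C$, so every neighbour of a vertex $b\in V(H)$ lies in $V(C)\cup Int(C)$ and $\deg_G(b)=\deg_{\bar H}(b)$; (ii) each $b\in V(H)$ is internal to $G$, hence has even degree and (as $G$ is a W-triangulation) a wheel neighbourhood $N[b]$ whose rim $W_b$ is an induced cycle; and (iii) $C$ is chordless. I would first dispose of part~1 by a parity count. A single interior vertex would have to be adjacent to all $s+1\ge 5$ vertices of $C$ to triangulate the disk, giving it odd degree $s+1$, a contradiction. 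For exactly two interior vertices $b_1,b_2$, Euler's formula on $\bar H$ forces $e(\bar H)=2s+5$, so the number of interior--boundary edges plus the indicator $\beta$ of the edge $b_1b_2$ equals $s+4$; since $s$ is even and each $\deg b_i=e_i+\beta$ is even (so $e_i\equiv\beta$), reducing $e_1+e_2+\beta=s+4$ mod $2$ gives $\beta\equiv 0$, whence $b_1,b_2$ are non-adjacent, each adjacent to all of $C$, which is impossible. Hence $|V(H)|\ge 3$.

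For part~2, suppose some $b\in V(H)$ has non-consecutive neighbours on $C$. Listing the $C$-neighbours of $b$ in cyclic order partitions $C$ into \emph{gaps}, and non-consecutiveness is equivalent to having at least two gaps of length $\ge 2$. Each gap, flanked by consecutive $C$-neighbours $a_p,a_q$ with no $b$-neighbour strictly inside, yields a cycle $C''=b,a_p,a_{p+1},\dots,a_q$ that is \emph{chordless}: it has no $C$-chord (the arc $a_p\dots a_q$ is consecutive on $C$) and no $b$-chord (no $b$-neighbour lies in the gap). If some gap of length $\ge 2$ is \emph{odd}, then $C''$ is a chordless odd cycle of length $\ge 5$ lying in $C\cup Int(C)$ and distinct from $C$, contradicting minimality. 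So I may assume every gap of length $\ge 2$ is even.

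The crux is this all-even-gaps case, and here the key is to \emph{reroute} $C$ through the interior $b$-neighbours living over a gap. For an even gap the wheel faces at $b$ give a rim path $a_p,c_1,\dots,c_r,a_q$ with $r\ge 1$ (else $a_pa_q$ would be a chord of $C$) and all $c_i\in Int(C)$. Replacing the gap arc of $C$ by this rim path produces a cycle $\tilde C$. The decisive planarity observation is that each $c_i$, lying on $W_b$ between the wheel disk and the lens cut off by the gap, has \emph{all} its neighbours in $\{b\}\cup W_b\cup(\text{lens})$; in particular the only $C$-vertices adjacent to $c_i$ lie in the gap's own arc. Combined with $C$ chordless and $W_b$ induced, this makes $\tilde C$ chordless. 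Its length differs from $|C|$ by $(r+1)-k$, so (as $k$ and $s$ are even) $\tilde C$ is odd exactly when $r$ is odd. Thus if any even gap carries an odd number of interior $b$-neighbours, $\tilde C$ is a chordless odd hole in $C\cup Int(C)$ distinct from $C$, again contradicting minimality.

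It remains to handle the case where every gap of length $\ge 2$ is even \emph{and} carries an even number of interior $b$-neighbours. Here I would reroute two such gaps simultaneously: replacing both gap arcs by their rim paths gives a cycle $\tilde C'$ which, by the same planarity argument (the two lenses are disjoint and separated by $b$), is chordless, while its length has parity $1+r_i+r_j\equiv 1\pmod 2$; hence $\tilde C'$ is an odd hole of length $\ge 5$ in $C\cup Int(C)$, a final contradiction. Therefore the neighbours of every $b\in V(H)$ on $C$ are consecutive. The step I expect to be the main obstacle is precisely the even-gap analysis: one must recognise that $b$ alone cannot certify an odd hole there and instead use its interior neighbours, and the argument only closes because of the planarity fact that a rim vertex over a gap attaches to $C$ solely within that gap, which is what guarantees the rerouted cycles are chordless.
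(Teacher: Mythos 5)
Your proof is correct and follows essentially the same route as the paper's: both arguments use the wheel rim around $b$, force the two ``gap'' arcs of $C$ to have even length and the corresponding rim arcs to have odd length via minimality of $C$, and then reroute $C$ through both rim arcs simultaneously to produce a smaller odd hole in $C\cup Int(C)$. The only cosmetic differences are your Euler-formula count in the two-interior-vertex case (the paper counts common neighbours of the adjacent pair instead) and your more explicit justification of chordlessness of the rerouted cycles, which the paper merely asserts.
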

\begin{proof}
 Since $G$ is an even W-triangulation, $H$ cannot be a single vertex as otherwise the degree of that vertex would be $|S|$, 
which is odd. The number of vertices in $H$ cannot be two, as in that case 
the two vertices must be adjacent, with even degree and having exactly two common neighbours on $C$.  However, this
will contradict the parity of the number of vertices in $C$. Thus, $H$ has at least $3$ vertices. (Note that it is possible
for $H$ to have exactly three vertices as in Figure~\ref{fig00}).  

Let $b$ be a vertex of $H$ with at least one neighbour on $C$. Without loss of generality, we may assume that $a_0$ is adjacent to $b$. 
For contradiction, assume that the neighbours of $b$ on $C$ are not consecutive. That is, for some $0<i<j<k<l\le s+1$, we have
$\{a_0, a_1, \ldots a_{i-1}\}\subseteq N(b)$, $\{a_j, a_{j+1},\ldots, a_{k-1}\} \subseteq N(b)$ and $a_{l \bmod (s+1)} \in N(b)$; but
$\{a_i, a_{i+1}, \ldots, a_{j-1}\}  \cap N(b) =\emptyset$ and $\{a_k, a_{k+1}, \ldots, a_{l-1}\} \cap N(b) =\emptyset$ (see Figure~\ref{fig01}). 
Note that, it is possible to have $a_{l \bmod (s+1)} =a_0$. 

Let $P_1, P_2, P_3, P_4$ and $P_5$ be subpaths of $C$ defined as follows: 
$P_1 = a_0 a_1 \ldots a_{i-1}$, $P_2 = a_{i-1}a_i a_{i+1} \ldots a_{j}$, $P_3 = a_j a_{j+1} \ldots a_{k-1}$, $P_4 = a_{k-1} a_k a_{k+1} 
\ldots a_{l}$ and $P_5 = a_l a_{l+1} \ldots a_{s}a_0$. Note that, the union of these five paths is $C$. 
We know that the closed neighbourhood of $b$ induces a wheel. 
Let $R$ denote the cycle forming the exterior boundary of this wheel. Since $G$ is an even W-triangulation, $R$ is of even length. 
Let $Q_1=x_1 x_2 \ldots x_p$, $Q_2=y_1 y_2 \ldots y_q$ be subpaths of $R$ that are vertex disjoint from $C$, such that 
the concatenation of the paths $P_1, a_{i-1} x_1, Q_1, x_p a_j, P_3, a_{k-1} y_1, Q_2, y_q a_l$ (in that order) forms a subpath of $R$.  

The path $P_2$ has at least two edges, because $i<j$. It is easy to see that $b a_{i-1} P_2 a_j b$ is an induced cycle in $G$ and since $C$ 
is assumed to be a minimal odd hole, $P_2$ must be of even length. 
Similarly, the path $P_4$ also has at least two edges and is of even length.   
Note that, the edge set $(E(C) \setminus E(P_2)) \cup E(Q_1) \cup \{a_{i-1} x_1, x_p a_j\}$ is another induced cycle in $G$. Hence, 
the path $Q_1$ must be of odd length. Similarly, the path $Q_2$ is also of odd length. 
But, this implies that the induced cycle formed by the edge set 
$(E(C) \setminus (E(P_2) \cup E(P_4))) \cup E(Q_1) \cup E(Q_2) \cup \{a_{i-1} x_1, x_p a_j, a_{k-1} y_1, y_q a_l\}$ is an odd hole.
This contradicts the minimality of $C$ and hence we can conclude that the neighbours of $b$ on $C$ must be consecutive.
\end{proof}
\begin{figure}[!htb]
\centering
 \includegraphics[scale=0.70]{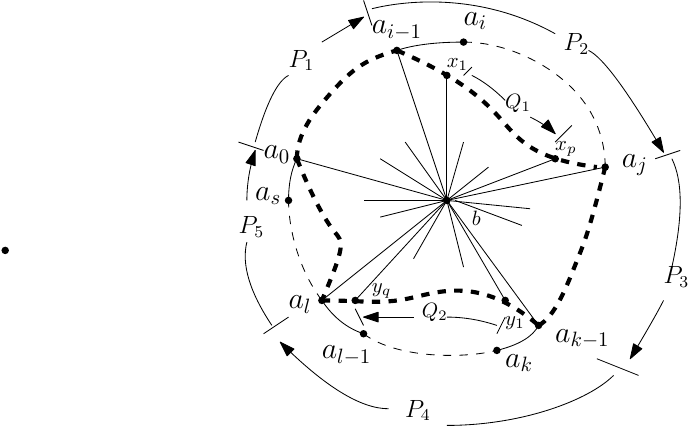}
\caption{b is a vertex with non consecutive neighbours in $C$}
\label{fig01}
\end{figure}

\begin{lemma}\label{lem:H-connected}
 $H$ is a $2$-connected near-triangulation with at least three vertices. 
\end{lemma}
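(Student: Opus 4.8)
The plan is to realize $H$ as the interior of a triangulated disk and read off its structure from $G$. First I would set $G_C := G[V(C)\cup Int(C)]$ and note that, since $G$ is a plane near-triangulation and $C$ is a cycle, $G_C$ is a plane near-triangulation whose outer face is bounded by $C$, and that $G_C$ is connected. Because $C$ is a hole it is chordless in $G$, so every bounded (triangular) face of $G_C$ has at most two vertices on $C$, and when it has two they are consecutive on $C$; in particular each boundary edge $a_ja_{j+1}$ carries a \emph{unique} inner triangle $a_ja_{j+1}w_j$ with $w_j\in Int(C)$.

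Next I would show that every bounded face of $H=G_C\setminus V(C)$ is a triangle. A bounded face $f$ of $H$ lies strictly inside the Jordan curve $C$, so its interior contains no vertex of $C$; by definition it contains no vertex of $H$ either, hence no vertex of $G_C$ at all. The faces of $G_C$ tiling $f$ are therefore triangles whose vertices all lie on $\partial f\subseteq V(H)$, and any internal edge of this tiling would be a chord of $\partial f$ with both endpoints in $V(H)$, i.e.\ an edge of $H$ drawn inside $f$ — impossible for a face of $H$. Hence the tiling is a single triangle and $f$ is triangular. Consequently, once I know the outer boundary of $H$ is a cycle, $H$ is a near-triangulation.

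For connectivity I would use the link paths. For each $a_j$, its interior neighbours listed in rotational order between $a_{j-1}$ and $a_{j+1}$ form a path $p_j$ in $H$ (consecutive neighbours close a triangle with $a_j$, hence are adjacent), with endpoints $w_{j-1}$ and $w_j$. Thus consecutive $w_{j-1},w_j$ lie in one component, so all the $w_j$ lie in a single component $H_0$; every interior vertex adjacent to $C$ lies on some $p_j$ and hence in $H_0$; and a component of $H$ not adjacent to $C$ would be a component of the connected graph $G_C$, which is absurd. So $H$ is connected, and together with $\lvert V(H)\rvert\ge 3$ from Claim~\ref{claim:connect} this gives the ``at least three vertices'' part.

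The main obstacle is ruling out a cut vertex $v$ of $H$. Since all bounded faces of $H$ are triangles, a cut vertex must appear at least twice on the outer-face boundary, i.e.\ $v$ must meet the outer face in at least two distinct sectors of its rotation. The delicate sub-fact I would establish is that, because $G$ has no separating triangle and $\lvert V(C)\rvert\ge 5$, whenever $v$ is adjacent to two consecutive vertices $a_j,a_{j+1}$ of $C$ the triangle $va_ja_{j+1}$ must bound a face (otherwise its interior contains a vertex while $a_{j+2}$ lies in its exterior, making it separating); hence $a_j,a_{j+1}$ are adjacent in the link cycle $L_v$ of $v$ in $G_C$. Combining this with Claim~\ref{claim:connect} — the neighbours of $v$ on $C$ form a single arc $a_p,\dots,a_q$ — the $C$-neighbours of $v$ occupy one contiguous arc of $L_v$. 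Deleting $V(C)$ then removes exactly one arc from the cyclic list $L_v$, so the neighbours of $v$ in $H$ form a single path and $v$ is incident to the outer face of $H$ in exactly one sector, contradicting the cut-vertex assumption. With $H$ connected, on at least three vertices, and free of cut vertices, $H$ is $2$-connected; since every face boundary of a $2$-connected plane graph is a cycle and all bounded faces are triangles, $H$ is a $2$-connected near-triangulation.
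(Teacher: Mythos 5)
Your proof is correct and follows essentially the same route as the paper: the crux in both arguments is that the $C$-neighbours of a putative cut vertex form a contiguous arc of its link cycle (via Claim~\ref{claim:connect} together with the absence of separating triangles), so removing $V(C)$ leaves the remaining neighbours of that vertex on a single path and it cannot separate $H$. You additionally spell out the connectivity of $H$ and the triangularity of its bounded faces, which the paper dismisses as immediate.
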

\begin{proof}
 Since $G$ is a plane near-triangulation, it is easy to see that $H$ is a plane near-triangulation. Hence, 
 by Claim~\ref{claim:connect}, it only remains to prove that $H$ has no cut vertices.
 
 For contradiction, suppose $b$ is a cut vertex in $H$. Since $H$ is a near-triangulation, $b$ must be a vertex on the exterior face of $H$. (Any internal vertex $v$ of $H$ has the property that each pair of its neighbours lie on a cycle. Thus $v$ cannot be a cut vertex of $H$.) Moreover, as $H$ is the subgraph induced by the vertices in $Int(C)$ and as $G$ is a near-triangulation, every vertex on the exterior boundary of $H$ must be adjacent to some vertex in $C$. Hence $b$ must have at least one neighbour on $C$
 By Claim~\ref{claim:connect}, neighbours of $b$ on $C$ are consecutive. 
 Without loss of generality, let $N(b) \cap V(C)=\{a_0, a_1, \ldots, a_i\}$, for some $0 \le i < s$. 
 Since $G$ is a W-triangulation, the neighbourhood of $b$ induces a wheel
 with $b$ as the center. Let $R$ be the induced cycle formed by the neighbours of $b$ in $G$.  
 Since $G$ has no separating triangles, $a_0, a_1, \ldots, a_i$ should be consecutive vertices in $R$ as well. Let $H_1$ and $H_2$ be two connected 
 components of $H \setminus b$. Let $v_1$ be a neighbour of $b$ in $H_1$ and $v_2$ be a neighbour of $b$ in $H_2$. Both $v_1$ and $v_2$ are 
 vertices that belong to the cycle $R$ and they cannot be consecutive on $R$. 
 Let $P_1$ and $P_2$ denote the two edge disjoint paths between $v_1$ and $v_2$ in $R$ such that their union is $R$. 
 As $v_1$ and $v_2$ are not connected in $H\setminus b$,  both $P_1$ and $P_2$ must intersect $C$. 
 These intersections happen on vertices in $V(C) \cap V(R) \subseteq N(b)$.  
 Hence there exist $0 \le j, k\le i$ such that $a_j \in V(P_1) \cap V(C)$ and $a_k \in  V(P_2) \cap V(C)$.  
 Note that, if we delete $v_1$ and $v_2$ from $R$, $a_j$ and $a_k$ get disconnected from each other. 
 However, this is impossible, since vertices $N(b) \cap V(C)=\{a_0, a_1, \ldots, a_i\}$ are known to be consecutive on $R$. 
 Hence, $H$ is $2$-connected.
\end{proof}
Consequently from Lemma~\ref{lem:H-connected} we have:  
\begin{Corollary}\label{cor:H}
The boundary of the exterior face of $H$ has at least three vertices.
 \end{Corollary}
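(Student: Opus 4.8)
The plan is to derive Corollary~\ref{cor:H} directly from Lemma~\ref{lem:H-connected}, which establishes that $H$ is a $2$-connected plane near-triangulation with at least three vertices. The key observation is that the exterior boundary of any $2$-connected plane near-triangulation is precisely the cycle bounding its outer face, and this boundary must traverse at least three distinct vertices.

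First I would recall the definition of a plane near-triangulation: every bounded face is a triangle, and the only face permitted to be non-triangular is the exterior face. Since $H$ is $2$-connected (by Lemma~\ref{lem:H-connected}), the boundary of every face of $H$, including the exterior face, is a cycle rather than merely a closed walk that might repeat vertices or edges; this is the standard fact that in a $2$-connected plane graph each face is bounded by a cycle. Thus the exterior boundary of $H$ is a genuine cycle in the graph-theoretic sense.

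Next, since a cycle in a simple graph must have length at least three, the exterior boundary cycle contains at least three distinct vertices. This is where I would invoke the simplicity of $G$ (and hence of the induced subgraph $H$), which rules out multi-edges or loops that could otherwise produce shorter facial boundaries in a $2$-connected plane graph. Combining the $2$-connectivity with simplicity forces the outer face to be bounded by a cycle of length at least three, giving the claimed conclusion.

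I expect this corollary to be essentially immediate rather than containing a genuine obstacle, as it merely repackages the structural content already secured in Lemma~\ref{lem:H-connected}. The only point requiring slight care is the explicit appeal to the fact that in a $2$-connected plane graph every face boundary is a cycle; once that is cited, the bound of three vertices follows from the absence of loops and parallel edges. Hence the proof amounts to: $H$ is $2$-connected and simple, so its exterior face is bounded by a cycle, and every cycle in a simple graph has at least three vertices.
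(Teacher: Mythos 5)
Your proposal is correct and follows the same route as the paper, which simply states the corollary as an immediate consequence of Lemma~\ref{lem:H-connected} without further elaboration. Your explicit justification---that $2$-connectivity forces every face boundary, in particular the exterior one, to be a cycle, and simplicity forces any cycle to have at least three vertices---is exactly the standard argument the paper leaves implicit.
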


%
%
Let $C'$ be the cycle forming the boundary of the exterior face of $H$. If $C'$ is a triangle, then it must be a facial triangle in $G$, as $G$ is assumed
to contain no separating triangles. In this case, the exterior face of the closed 
neighbourhood of $C'$ is the odd hole $C$, and Theorem~\ref{thm:main-thm} is immediate.  Hence, we assume hereafter that
$C'$ is not a triangle.  

Let $T=\{b_0,b_1,\ldots,b_t\}$, $t\geq 3$   
be the vertices of $C'$, listed in clockwise order. 
To simplify the notation, reference to a vertex  
$b_j\in T$ for $j>t$ may be inferred as reference to 
the vertex $b_{j\bmod (t+1)}$.  
It is easy to see that every vertex in $T$ is a neighbour of at least one vertex in $S$ and vice versa. 
Let $G'=(V',E')$ be the subgraph of $G$ with $V'=S\cup T$ and 
$E'=\{uv: u\in S , v\in T\} \cup E(C) \cup E(C')$. Note that, $b_ib_{i+1}\in E'$, for all $0\leq i \leq t$; 
but $E'$ excludes chords in $C'$.  Since $G$ is a plane near-triangulation, the following observation is immediate.
\begin{Observation}\label{obs:neighbour}
For every $0\leq i\leq s$ and $0 \leq j \leq t$:
\begin{enumerate}
\item The neighbours of $a_i$ in $C'$ must be $b_l,b_{l+1}\ldots b_m$ 
for some consecutive integers $l,l+1\ldots m$. 
\item The neighbours of $b_j$ in $C$ must be $a_k,a_{k+1}\ldots a_r$ 
for some consecutive integers $k,k+1\ldots r$.
\item $a_{i}$ and $a_{i+1}$ must have a common neighbour in $C'$.
\item $b_{j}$ and $b_{j+1}$ must have a common neighbour in $C$.
\item The minimum degree of any vertex of $G'$ is at least $3$.
\end{enumerate}  
\end{Observation}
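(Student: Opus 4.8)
The plan is to prove each of the five parts of Observation~\ref{obs:neighbour} by exploiting the structure already established: that $C$ is a \emph{minimal} odd hole, that $H$ is a $2$-connected plane near-triangulation (Lemma~\ref{lem:H-connected}), and that $G$ is an even W-triangulation with no separating triangles. Parts (1) and (2) are the consecutiveness statements, and they are symmetric in the roles of $C$ and $C'$. For part (1), I would fix a vertex $a_i \in S$ and consider its neighbours on the cycle $C'$; the argument should mirror the proof of Claim~\ref{claim:connect}, which already established that the neighbours of any vertex of $H$ on $C$ are consecutive. Here the situation is slightly different because $a_i$ lies on the outer cycle rather than inside, but the same idea applies: if the neighbours of $a_i$ on $C'$ were not consecutive, I could build an odd hole strictly inside $C$ (using subpaths of $C'$ together with the even/odd parity constraints forced by the even W-triangulation and the minimality of $C$), contradicting the minimality of $C$. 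For part (2), the symmetric statement about $b_j$'s neighbours on $C$, I would apply Claim~\ref{claim:connect} almost verbatim, since $b_j \in V(H)$ and the claim is exactly about neighbours of an $H$-vertex on $C$.

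For parts (3) and (4), which assert that consecutive vertices on one cycle share a common neighbour on the other, the key tool is that $G$ is a plane near-triangulation, so the annular region between $C$ and $C'$ is triangulated. I would argue that the edge $a_i a_{i+1}$ (an edge of $C$) bounds a triangular face on its inner side whose third vertex must lie on the next layer toward $C'$; by following the triangulation across the annulus — or more directly, by observing that the region strictly between $C$ and $C'$ contains no vertices (since $C'$ is the exterior boundary of $H = G[Int(C)]$, so $S$ and $T$ are adjacent layers) — the triangle incident to $a_i a_{i+1}$ on the inside must have its apex on $C'$, giving the common neighbour. Part (4) is the mirror image, using the edge $b_j b_{j+1}$ of $C'$ and the triangular face on its outer side. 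The crucial structural fact to pin down is that there is no vertex strictly between the two cycles, so that every inner face of the annulus is a triangle with vertices drawn only from $S \cup T$; this is what forces the apex to land on the opposite cycle.

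For part (5), the minimum-degree claim in $G'$, I would combine the preceding parts. A vertex $a_i \in S$ has its two cycle-neighbours $a_{i-1}, a_{i+1}$ on $C$ plus at least one neighbour in $T$ (every vertex of $S$ has a neighbour in $T$, as noted just before the observation), so its degree in $G'$ is at least $3$. Symmetrically, each $b_j \in T$ has neighbours $b_{j-1}, b_{j+1}$ on $C'$ together with at least one neighbour in $S$, again giving degree at least $3$. I should double-check the degenerate possibility that a cycle-neighbour coincides with the cross-cycle neighbour, but since $S$ and $T$ are disjoint vertex sets and $E'$ keeps the $C$- and $C'$-edges separate from the cross edges, no such collision occurs.

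The main obstacle I anticipate is the non-consecutiveness argument in part (1). The proof of Claim~\ref{claim:connect} works for a vertex whose non-consecutive neighbours lie on $C$, using carefully chosen subpaths $P_1,\dots,P_5$ and the parity bookkeeping forced by the even W-triangulation; transplanting this to a vertex $a_i$ on $C$ with non-consecutive neighbours on $C'$ requires re-establishing the analogous parity facts for the inner cycle $C'$ and its bounding subpaths, and verifying that the odd hole one constructs genuinely lies inside $C$ to contradict minimality. Parts (2)--(5), by contrast, follow either directly from Claim~\ref{claim:connect} or from routine planarity and triangulation arguments, so they should require little more than careful statement.
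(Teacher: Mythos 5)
The paper offers no proof at all for this observation---it is declared ``immediate'' from the fact that $G$ is a plane near-triangulation---so any written argument is, in that sense, a different route. Your plans for parts (2)--(5) are sound: part (2) is literally Claim~\ref{claim:connect}, and parts (3)--(5) follow exactly as you say once you pin down that the open annulus between $C$ and $C'$ contains no vertices (every vertex of $Int(C)$ adjacent to a vertex of $C$ must lie on the exterior boundary of $H$, i.e.\ on $C'$). The one place where you are working too hard is part (1), which you flag as your main obstacle: no minimality or parity argument is needed there, and trying to transplant the machinery of Claim~\ref{claim:connect} is the wrong instinct. Instead, look at the fan of faces at $a_i$ inside the disc bounded by $C$: its neighbours there are $a_{i-1}, u_1, \ldots, u_{k-1}, a_{i+1}$ in rotational order, each consecutive pair spanning a triangular face, so each $u_ju_{j+1}$ is an edge of $G$ bounding a face in the annulus. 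The $u_j$ all lie on $C'$ (no chords of $C$, no vertices in the open annulus), and an edge between two vertices of $C'$ that bounds a face \emph{outside} the disc enclosed by $C'$ cannot be a chord of $C'$ (chords of $C'$ are edges of $H$ and are drawn inside that disc), so each $u_ju_{j+1}$ is an edge of $C'$ and the $u_j$ are consecutive on $C'$. This is the elementary planarity argument that makes the observation ``immediate,'' and it dissolves the obstacle you anticipated; the same face-location reasoning also supplies the small missing check in your part (4), namely that the apex of the face outside $b_jb_{j+1}$ cannot itself lie on $C'$ (that would force $C'$ to be a triangle or a chord of $C'$ to be drawn in the annulus), hence lies on $C$.
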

Consider a vertex $a_i\in S$.  Suppose $b_l,b_{l+1},\ldots b_{m}$ are the neighbours of $a_i$ in $C'$.  
If there is an edge between some non-consecutive vertices $b_p$ and $b_q$ for some $l \le p, q \le m$, 
then $a_i,b_p$ and $b_q$ will form a separating triangle in $G$, a contradiction.  Hence we have: 
\begin{Observation}\label{Obs:triangle}
For any $a_i \in S$, there exists an edge between two neighbours of $a_i$ in $C'$ if and only if they are consecutive in $C'$.   
\end{Observation}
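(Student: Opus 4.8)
The plan is to prove the two directions separately, with the forward (``only if'') direction carrying all of the content. For the ``if'' direction, note that any two consecutive vertices of $C'$ are joined by an edge of the cycle $C'$ by definition of $C'$; hence if two neighbours of $a_i$ in $C'$ happen to be consecutive on $C'$, the edge between them is present automatically and nothing further is needed.

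For the ``only if'' direction I would argue by contradiction, using the standing assumption that $G$ contains no separating triangle. Fix $a_i \in S$ and, by Observation~\ref{obs:neighbour}(1), write its neighbours in $C'$ as a consecutive block $b_l, b_{l+1}, \ldots, b_m$. Suppose $b_p b_q \in E(G)$ for some $l \le p < q \le m$ with $b_p$ and $b_q$ \emph{not} consecutive on $C'$, so that $q \ge p+2$. Then $\Delta = a_i b_p b_q$ is a triangle of $G$, and the goal becomes to show that $\Delta$ is separating, i.e. that both $Int(\Delta)$ and $Ext(\Delta)$ are non-empty, which contradicts the assumption.

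The key step is a planarity argument locating vertices on the two sides of $\Delta$. Since $a_i \in S$ lies in the exterior of the cycle $C'$, and its neighbours $b_l, \ldots, b_m$ occur in the same cyclic order both along $C'$ and in the rotation around $a_i$, the intermediate neighbour $b_{p+1}$ (which exists and is adjacent to $a_i$, because $q \ge p+2$ forces $p+1 \in [l,m]$) lies strictly inside the angle $\angle b_p a_i b_q$ while sitting on the arc of $C'$ cut off by the chord $b_p b_q$; hence $b_{p+1} \in Int(\Delta)$. For the exterior, observe that since $b_p$ and $b_q$ are non-consecutive on the cycle $C'$, the complementary arc of $C'$ from $b_q$ back to $b_p$ must also contain at least one intermediate vertex $b_r$; this $b_r$ lies on the far side of the chord $b_p b_q$, and therefore $b_r \in Ext(\Delta)$. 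Thus $\Delta$ is a separating triangle, giving the desired contradiction and forcing $b_p, b_q$ to be consecutive on $C'$.

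The main obstacle is making the two membership claims $b_{p+1} \in Int(\Delta)$ and $b_r \in Ext(\Delta)$ rigorous rather than merely pictorial. The crux is to pin down the fixed plane embedding: that $a_i$ lies outside $C'$ (which follows from $C'$ being the boundary of the exterior face of $H$ together with Lemma~\ref{lem:H-connected}, since the region bounded by $C'$ is nested inside $C$), and that the clockwise order of the neighbours of $a_i$ agrees with the clockwise order of $b_l, \ldots, b_m$ along $C'$. Once the embedding is fixed in this way, applying the Jordan curve theorem to the closed triangle $\Delta$ separates the near-arc vertex $b_{p+1}$ from the far-arc vertex $b_r$, and the remaining verifications are routine.
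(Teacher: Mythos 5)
Your proposal is correct and follows essentially the same route as the paper: the paper's justification is exactly that an edge between two non-consecutive neighbours $b_p,b_q$ of $a_i$ in $C'$ would make $a_ib_pb_q$ a separating triangle, contradicting the assumption that $G$ is a W-triangulation. You merely spell out the (correct) Jordan-curve details showing the triangle separates a near-arc vertex from a far-arc vertex, which the paper leaves implicit.
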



\begin{lemma}\label{evenC'}
For any vertex $b_j\in T$, $deg_{G'}(b_j)$ is either $3$ or an even number greater than $3$. 
\end{lemma}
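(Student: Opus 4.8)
The plan is to compute $\deg_{G'}(b_j)$ explicitly and then to eliminate the single bad case by producing a shorter odd hole that contradicts the minimality of $C$. First I would observe that the edges of $G'$ incident to $b_j$ come from exactly two sources: the two rim edges $b_{j-1}b_j$ and $b_jb_{j+1}$ of $C'$ (distinct since $t\ge 3$), and the edges joining $b_j$ to its neighbours in $S$. By Observation~\ref{obs:neighbour}(2) the latter neighbours form a consecutive block $a_k,a_{k+1},\ldots,a_r$ on $C$; writing $d$ for the number of vertices in this block and using $S\cap T=\emptyset$ to avoid double counting, I get $\deg_{G'}(b_j)=d+2$. By Observation~\ref{obs:neighbour}(5) this is at least $3$, so $d\ge 1$. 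Now $d+2$ equals $3$ when $d=1$ and is an even number exceeding $3$ whenever $d$ is even and positive; the only way it can be odd and greater than $3$ is $d$ odd with $d\ge 3$. Hence it suffices to show that $d$ cannot be odd and at least $3$.

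The key step is a surgery on $C$. Assuming $d\ge 3$, I would delete the sub-path $a_k a_{k+1}\cdots a_r$ of $C$ (which has $d-1$ edges) and reroute through $b_j$ via the two-edge path $a_k b_j a_r$, obtaining the cycle $C^{*}$ formed by $b_j$ together with the complementary arc $a_r a_{r+1}\cdots a_k$ of $C$. I must then check that $C^{*}$ is chordless: there is no chord among its $C$-vertices because $C$ is an induced cycle, and there is no chord from $b_j$ to the interior of the complementary arc because, by Observation~\ref{obs:neighbour}(2), the neighbours of $b_j$ on $C$ are exactly $a_k,\ldots,a_r$, which are disjoint from that arc's interior. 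Counting edges gives $|C^{*}|=(s+1)-(d-1)+2=s-d+4$, and since $s$ is even, $C^{*}$ is odd precisely when $d$ is odd. All vertices of $C^{*}$ lie in $V(C)\cup\{b_j\}\subseteq V(C)\cup Int(C)$, and $C^{*}\neq C$ because $b_j\in Int(C)$; so once I know $C^{*}$ is an odd hole, it contradicts the minimality of $C$.

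The hard part will be disposing of the degenerate possibility that $C^{*}$ is merely a triangle, i.e.\ $s-d+4=3$, rather than a genuine odd hole of length at least $5$. I would handle this by a short parity-and-counting argument: when $d$ is odd we have $d\neq s$ (as $s$ is even), and $d\neq s+1$ because $d=s+1$ would make $b_j$ adjacent to every vertex of $C$, which forces $Int(C)=\{b_j\}$ and contradicts Lemma~\ref{lem:H-connected} (whereby $|V(H)|\ge 3$). Therefore $d\le s-1$, giving $|C^{*}|=s-d+4\ge 5$, so $C^{*}$ is indeed an odd hole. This contradiction rules out $d$ odd with $d\ge 3$, and together with the first paragraph it shows $\deg_{G'}(b_j)=d+2$ is either $3$ (when $d=1$) or an even number greater than $3$ (when $d$ is even), completing the proof.
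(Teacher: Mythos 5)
Your proof is correct and follows essentially the same route as the paper: both compute $\deg_{G'}(b_j)$ as $2$ plus the size of the consecutive block of neighbours of $b_j$ on $C$, and in the bad case (odd block of size at least $3$) replace that subpath of $C$ by the two edges through $b_j$ to obtain a shorter odd hole in $C\cup Int(C)$, contradicting the minimality of $C$. If anything, you are more careful than the paper about the degenerate cases ($d=s$ and $d=s+1$) and about verifying chordlessness, which the paper dispatches with a bare ``Note that $a_ka_i\notin E(G)$.''
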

\begin{proof}
Let $deg_{G'}(b_j)\neq 3$ and be odd. Since $deg_{C'}(b_j)=2$, the neighbours of $b_j$ in $C$ form a path (say $Z$) of even length ($\geq 2$). 
Let the neighbours of $b_j$ in $C$ be $a_i,a_{i+1},\ldots,a_k$ in the clockwise order (see Figure~\ref{fig0}(a)). 
Note that $a_k a_i \notin E(G)$. 
Further, since $C$ is an odd hole, the length of the subpath $Y$ of $C$ from $a_k$ to $a_i$ in clockwise direction, 
must be an odd number $\geq 3$. 
Hence the cycle formed by replacing the path $Z$ in $C$ with the edges $a_ib_j,b_ja_k$ will be an odd hole, distinct from $C$, 
with vertices chosen only from $S$ and $Int(C)$. This contradicts the minimality of $C$.

\end{proof}
\begin{figure}[!htb]
\centering
 \includegraphics[scale=0.50]{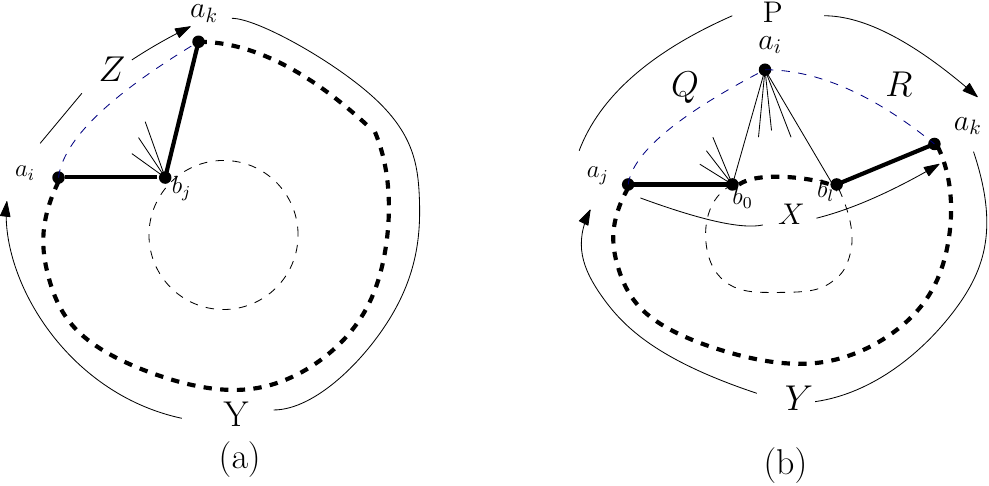}
\caption{ (a) The degree of vertex $b_i$ is odd. (b) The degree of vertex $a_i$ is odd}
\label{fig0}
\end{figure}

\begin{lemma}\label{evenC} 
For any vertex $a_i\in S$, $deg_{G'}(a_i)$ is either $3$ or an even number greater than $3$. 
\end{lemma}
\begin{proof}
Suppose $deg_{G'}(a_i) > 3$. With no loss of generality, let the neighbours of $a_i$ in $C'$ be  
$b_0,b_1,\ldots,b_l$, $l\geq 1$ in the clockwise order (see Figure~\ref{fig0}(b)). For contradiction, suppose $l$ is even. Then, the length of the path
$b_0,b_1\ldots, b_l$ is even.  
Let the neighbours of $b_0$ in the clockwise direction in $C$  be $a_{j},a_{j+1},\ldots,a_i$ and the neighbours of $b_l$ in the clockwise direction in $C$ be 
$a_i,a_{i+1},\ldots,a_{k}$. Vertices $b_0$ and $b_l$ cannot be adjacent in $G'$, as otherwise $a_i b_0 b_l$ would be a separating triangle in $G$ which is a contradiction. Hence, $b_l \ne b_t$ and $b_{l+1}\ne b_0$. Now, by Observation~2.3, $b_l$ and $b_{l+1}$ have a common neighbor in $C$. This common neighbor is different from $a_i$, since $b_{l+1}$ is not a neighbour of $a_i$. Thus, $deg_{G'}(b_l)>3$.   Similarly, $deg_{G'}(b_0)>3$.   
 Hence, we may assume without loss
of generality that $j<i$.  

Let $P$ be the path in $C$ from $a_{j}$ to $a_{k}$ through $a_i$. As $deg_{G'}(b_0)$  and $deg_{G'}(b_l)$ are even (by Lemma~\ref{evenC'}), the length of path $P$ is even, $\geq 2$. Let $X$ be the path $a_jb_0b_1\ldots b_la_k$. Since $l$ is assumed to be even, $deg_{G'}(a_i)$ is odd, and hence $X$ must be of even length. Note that, $a_k \neq a_j$, as in that case the path $P$ will be the whole (odd) cycle $C$ which is impossible as $P$ has even length. Thus, the edges $a_jb_l$ and $a_kb_0$ cannot be present in $G'$ as otherwise we would have $a_k=a_j$. Hence, unless 
$a_ka_j$ is an edge in $G'$, then $X$ is an induced path in $G$.  
However, $a_ka_j$ cannot be an edge in $G'$ as otherwise $X$ along with
the edge $a_ka_j$ induces an odd hole, distinct from $C$, with
vertices chosen only from $S$ and $Int(C)$, which is impossible. Thus, we conclude that $X$ is an induced path in $G'$. 
Consequently, the cycle formed by replacing  even length path $P$ in $C$ with the even length induced path $X$  will be an odd hole, distinct from $C$, with
vertices chosen only from $S$ and $Int(C)$. This contradicts the choice of $C$.
\end{proof}

We introduce some notation.  
A vertex $a_i\in S$ with $deg_{G'}(a_i)=4$ (respectively $deg_{G'}(a_i)>4$) will be called an $\alpha$ vertex (respectively $\alpha'$ vertex).  
Similarly a vertex $b_j\in T$ with $deg_{G'}(b_j)=4$ (respectively $deg_{G'}(b_j)>4$) will be called an $\beta$ vertex (respectively $\beta'$ vertex).  
Every degree three vertex in $S$ (respectively $T$) will be called a $\gamma$ vertex 
(respectively $\delta$ vertex). 
Let $V_\alpha,V_{\alpha'},V_\beta, V_{\beta'},V_{\gamma}$ and $V_{\delta}$ denote the set of 
$\alpha, \alpha',\beta,\beta',\gamma$ and $\delta$ vertices respectively and $N_\alpha,N_{\alpha'},N_\beta, N_{\beta'},N_{\gamma}$ and $N_{\delta}$ 
denote the number of $\alpha, \alpha',\beta,\beta',\gamma$ and $\delta$ vertices respectively (See Figure~\ref{fig1}). 

\begin{figure}[h]
\centering
 \includegraphics[scale=0.65]{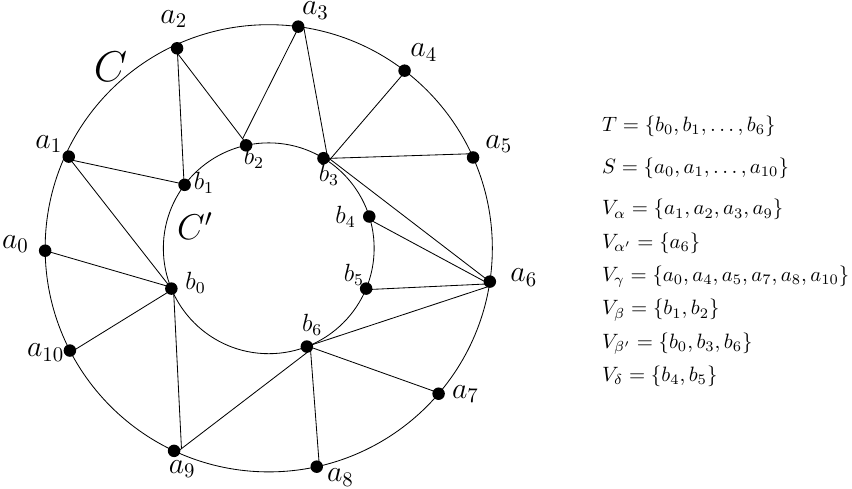}
\caption{An illustration of the notation.}
\label{fig1}
\end{figure}

With the above notation, the following observation is immediate. 
\begin{Observation}\label{obs:gamma}
~
\begin{enumerate}
\item Every $\alpha$ and $\alpha'$ vertex (respectively $\beta$ and $\beta'$ vertex) has exactly two neighbours in $T$ (respectively $S$) 
that are not $\delta$ vertices (respectively $\gamma$ vertices).
\item Every $\gamma$ vertex (respectively $\delta$ vertex) must have exactly one neighbour in $T$ (respectively $S$). 
Moreover the neighbour must be a $\beta'$ vertex (respectively $\alpha'$ vertex).
\end{enumerate}  
\end{Observation}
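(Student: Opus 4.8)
The plan is to translate the degree conditions defining each vertex class into statements about how many neighbours a vertex has on the \emph{other} cycle, and then read off both claims from the local triangulated structure of the annular region lying between $C$ and $C'$. The starting point is that, for any $a_i\in S$, every neighbour of $a_i$ inside $C$ lies on $C'$; hence $deg_{G'}(a_i)=2+|N(a_i)\cap T|$, where the $2$ accounts for the $C$-edges $a_{i-1}a_i$ and $a_ia_{i+1}$, and by Observation~\ref{obs:neighbour}(1) the set $N(a_i)\cap T$ consists of consecutive vertices $b_l,b_{l+1},\ldots,b_m$ of $C'$. Symmetrically $deg_{G'}(b_j)=2+|N(b_j)\cap S|$ with $N(b_j)\cap S$ consecutive on $C$. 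Thus a $\gamma$ vertex has exactly one neighbour in $T$, a $\delta$ vertex exactly one in $S$, while $\alpha/\alpha'$ and $\beta/\beta'$ vertices have at least two; this already gives the first halves of both statements. Because $G$ is a near-triangulation, the faces incident to $a_i$ on its inward side form a fan $a_ib_lb_{l+1},a_ib_{l+1}b_{l+2},\ldots,a_ib_{m-1}b_m$ of triangles, and the two faces incident to the $C$-edges $a_{i-1}a_i$ and $a_ia_{i+1}$ on that side are $a_{i-1}a_ib_l$ and $a_ia_{i+1}b_m$ (up to swapping $b_l,b_m$); the analogous picture holds around each $b_j$.

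For part~2, let $a_i$ be a $\gamma$ vertex with unique $C'$-neighbour $b_j$. Since $b_j$ is the only inward neighbour of $a_i$, the two inward boundary faces at $a_i$ are forced to be $a_{i-1}a_ib_j$ and $a_ia_{i+1}b_j$, so $b_j$ is adjacent to $a_{i-1}$, $a_i$ and $a_{i+1}$. Hence $|N(b_j)\cap S|\ge 3$ and $deg_{G'}(b_j)\ge 5$; by Lemma~\ref{evenC'} this degree is even, so $deg_{G'}(b_j)>4$ and $b_j$ is a $\beta'$ vertex. The statement for $\delta$ vertices follows by the identical argument with the roles of $C$ and $C'$ (equivalently $S$ and $T$) interchanged, using Lemma~\ref{evenC}.

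For part~1, let $a_i$ be an $\alpha$ or $\alpha'$ vertex with $N(a_i)\cap T=\{b_l,\ldots,b_m\}$, $m\ge l+1$. Every interior neighbour $b_{l+k}$ with $0<k<m-l$ lies between the two fan triangles $a_ib_{l+k-1}b_{l+k}$ and $a_ib_{l+k}b_{l+k+1}$, which together fill the entire outward side of $b_{l+k}$; consequently $a_i$ is the only $S$-neighbour of $b_{l+k}$, so $b_{l+k}$ is a $\delta$ vertex. For the two extreme neighbours, the inward boundary faces at $a_i$ show that $b_l$ is adjacent to $a_{i-1}$ and $b_m$ to $a_{i+1}$ (this is also forced by Observation~\ref{obs:neighbour}(3)); thus each of $b_l,b_m$ has at least two neighbours in $S$ and is not a $\delta$ vertex. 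Equivalently, were $b_l$ a $\delta$ vertex, part~2 applied to it would make $a_i$ adjacent to $b_{l-1}\notin N(a_i)$, a contradiction. Hence exactly two neighbours of $a_i$ in $T$, namely $b_l$ and $b_m$, fail to be $\delta$ vertices; the claim for $\beta/\beta'$ vertices is symmetric.

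The main obstacle is the planar bookkeeping of the first paragraph: one must verify that the neighbours of $a_i$ inside $C$ are exactly its neighbours on $C'$ and that they appear as a consecutive fan whose bounding triangles attach to $a_{i-1}$ and $a_{i+1}$. Once this local picture around each boundary vertex is pinned down, every assertion reduces to counting the triangles of a fan, and no further argument is needed.
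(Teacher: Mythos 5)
Your proof is correct, and it fills in exactly the details the paper leaves implicit: the paper offers no proof of this Observation, declaring it ``immediate'' from the definitions of the vertex classes together with Observation~\ref{obs:neighbour} (consecutive neighbourhoods and common neighbours of consecutive boundary vertices) and the parity Lemmas~\ref{evenC'} and~\ref{evenC}. Your fan/triangulation bookkeeping is precisely the intended justification, so the approach is essentially the same, just written out in full.
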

 \begin{lemma}\label{lem:numgamma}
$N_\gamma$ and $N_\delta$ are even.  
\end{lemma}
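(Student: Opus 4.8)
The plan is to count the $\gamma$ vertices by grouping them according to their unique neighbour in $T$, and then read off the parity from the fact that every degree involved is even. I would first set up a map from $V_\gamma$ into $V_{\beta'}$ and compute the size of each of its fibres; by interchanging the roles of $S$ and $T$, the same argument will handle $N_\delta$, so it suffices to treat $N_\gamma$ in detail.

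By Observation~\ref{obs:gamma}(2), every $\gamma$ vertex has exactly one neighbour in $T$, and that neighbour is a $\beta'$ vertex. This lets me define a map $f\colon V_\gamma \to V_{\beta'}$ sending each $\gamma$ vertex to its unique neighbour in $T$. Since the image lands in $V_{\beta'}$ and every $\gamma$ vertex has a single image, the fibres $f^{-1}(b_j)$, for $b_j \in V_{\beta'}$, partition $V_\gamma$.

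Next I would compute the fibre sizes. Fix a $\beta'$ vertex $b_j$. Since $b_j$ lies on the cycle $C'$, it has exactly two neighbours in $C'$, so its number of neighbours in $S$ is $\deg_{G'}(b_j)-2$. By Observation~\ref{obs:gamma}(1), exactly two of these $S$-neighbours are not $\gamma$ vertices, hence $b_j$ has exactly $\deg_{G'}(b_j)-4$ neighbours in $S$ that are $\gamma$ vertices. Each such $\gamma$ neighbour $a_i$ has $b_j$ as its unique neighbour in $T$ (again by Observation~\ref{obs:gamma}(2)), so $f^{-1}(b_j)$ is precisely the set of $\gamma$-neighbours of $b_j$, which has size $\deg_{G'}(b_j)-4$. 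Summing over the partition then gives
\[
N_\gamma \;=\; \sum_{b_j \in V_{\beta'}} \bigl(\deg_{G'}(b_j)-4\bigr).
\]

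Finally, each $b_j \in V_{\beta'}$ satisfies $\deg_{G'}(b_j) > 4$, so by Lemma~\ref{evenC'} its degree is even; hence every summand $\deg_{G'}(b_j)-4$ is even and $N_\gamma$ is even. Exchanging the roles of $S$ and $T$ (using Observation~\ref{obs:gamma} together with Lemma~\ref{evenC} in place of Lemma~\ref{evenC'}), the identical argument yields $N_\delta=\sum_{a_i\in V_{\alpha'}}\bigl(\deg_{G'}(a_i)-4\bigr)$, which is even as well. The only real content is the fibre-size computation, and that follows directly from Observation~\ref{obs:gamma}; once the partition of $V_\gamma$ by the $\beta'$ vertices is in place, the even-degree lemmas make the parity immediate, so I do not expect a genuine obstacle beyond verifying carefully that the $\gamma$-neighbours of $b_j$ are exactly those $S$-neighbours of $b_j$ excluded by the two guaranteed non-$\gamma$ neighbours.
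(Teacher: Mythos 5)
Your proposal is correct and is essentially the paper's own argument in different clothing: the paper forms the bipartite subgraph between the degree-$3$ vertices and the high-degree vertices of the other cycle and double-counts its edges, which is exactly your partition of $V_\gamma$ into fibres of the map $f$ together with the fibre-size formula $\deg_{G'}(b_j)-4$. Both hinge on the same two facts (Observation~\ref{obs:gamma} and the even-degree Lemmas~\ref{evenC'} and~\ref{evenC}), so there is nothing further to add.
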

\begin{proof}
  Let $G''$ be the bipartite subgraph of $G'$ with vertex set $V''=V_\delta \cup V_{\alpha'}$ and 
 edge set $E''=\{a_i b_j \colon a_i \in V_{\alpha'}, b_j \in V_\delta\}$.  
  Consider any vertex $a_i \in V_{\alpha'}$. We know that  in $G'$, $a_i$ has exactly two neighbours in $S$. 
 Since by Lemma~\ref{evenC}, $deg_{G'}(a_i)$ is even, it follows that  in $G'$, $a_i$ has an even number of neighbours from $T$. 
 From Observation~\ref{obs:gamma}, $a_i$ has exactly two neighbours in $T$ that are not in $V_\delta$. Hence, the number of 
 edges from $a_i$ to $V_\delta$ in $G'$ must be even. 
 Therefore, in the bipartite graph $G''$, every vertex in $V_{\alpha'}$ has an even degree. 
 By Observation~\ref{obs:gamma}, $deg_{G''}(b_j)=1$, for each $b_j \in  V_\delta$. 
 Consequently, $N_\delta = |V_\delta|= \sum_{a_i \in V_{\alpha'}}{deg_{G''}(a_i)}$ 
 is an even number. 
 
 The proof for the claim that $N_\gamma$ is even is similar.  
\end{proof}

Note that $|S|=N_{\gamma}+N_\alpha+ N_{\alpha'}$. 
By Lemma~\ref{lem:numgamma}, $N_{\gamma}$ is even. As $C$ is an odd hole, we have:  

\begin{Corollary}\label{cor:numalpha}
$N_\alpha+ N_{\alpha'}$ is odd. 
\end{Corollary}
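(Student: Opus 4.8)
The plan is to establish the identity $|S| = N_{\gamma} + N_{\alpha} + N_{\alpha'}$ and combine it with the parity facts already available. Since $S$ is exactly the vertex set of the odd hole $C$, and $|S| = s+1$ with $s$ even, we have $|S|$ odd. Every vertex $a_i \in S$ has $\deg_{G'}(a_i) \geq 3$ by Observation~\ref{obs:neighbour}(5), and by Lemma~\ref{evenC} each such degree is either exactly $3$ or an even number exceeding $3$. Thus each $a_i$ falls into precisely one of the three classes: a $\gamma$ vertex (degree $3$), an $\alpha$ vertex (degree $4$), or an $\alpha'$ vertex (degree greater than $4$). These classes partition $S$, which immediately gives the counting identity $|S| = N_{\gamma} + N_{\alpha} + N_{\alpha'}$.

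Next I would invoke Lemma~\ref{lem:numgamma}, which tells us that $N_{\gamma}$ is even. Since $|S|$ is odd, rearranging the identity yields
\begin{equation*}
N_{\alpha} + N_{\alpha'} = |S| - N_{\gamma},
\end{equation*}
which is the difference of an odd number and an even number, hence odd. This completes the argument.

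The proof is essentially a two-line parity deduction, so there is no substantial obstacle here; all the genuine work has already been done in the supporting lemmas. The only point requiring a small amount of care is confirming that the three classes $V_{\gamma}$, $V_{\alpha}$, $V_{\alpha'}$ genuinely partition $S$ with no overlaps and no omissions. This is precisely what Lemma~\ref{evenC} guarantees by ruling out odd degrees greater than $3$, so that every vertex of $S$ has degree $3$, $4$, or an even value above $4$, and the definitions of $\gamma$, $\alpha$, $\alpha'$ cover each case exactly once.
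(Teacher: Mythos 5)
Your proof is correct and follows exactly the paper's route: the paper also notes $|S| = N_{\gamma} + N_{\alpha} + N_{\alpha'}$, invokes Lemma~\ref{lem:numgamma} for the evenness of $N_{\gamma}$, and uses the oddness of $|S|$ (since $C$ is an odd hole) to conclude. Your additional care in verifying that $V_{\gamma}$, $V_{\alpha}$, $V_{\alpha'}$ partition $S$ via Lemma~\ref{evenC} is a harmless elaboration of what the paper leaves implicit.
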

\begin{lemma}\label{obs:Nalphabeta}
~
\begin{enumerate}
\item Every vertex in $S$ is a neighbour of a $\beta$ or a  $\beta'$ vertex.
\item Every vertex in $T$ is a neighbour of a $\alpha$ or a  $\alpha'$ vertex.
\item $N_{\alpha}+N_{\alpha'}=N_{\beta}+N_{\beta'}$. 
\end{enumerate}
\end{lemma}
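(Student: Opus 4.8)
\emph{Proof plan.} Parts~1 and~2 are immediate from Observation~\ref{obs:gamma}. A vertex $a_i\in S$ is a $\gamma$, an $\alpha$, or an $\alpha'$ vertex. If $a_i$ is a $\gamma$ vertex, then by Observation~\ref{obs:gamma}(2) its unique neighbour in $T$ is a $\beta'$ vertex. If $a_i$ is an $\alpha$ or $\alpha'$ vertex, then by Observation~\ref{obs:gamma}(1) it has exactly two neighbours in $T$ that are not $\delta$ vertices, and a non-$\delta$ vertex of $T$ is a $\beta$ or $\beta'$ vertex. In either case $a_i$ has a $\beta$ or $\beta'$ neighbour, which proves Part~1; Part~2 follows from the symmetric argument with the roles of $S$ and $T$ interchanged.

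The substance of the lemma is Part~3, and my plan is to extract both $N_\alpha+N_{\alpha'}$ and $N_\beta+N_{\beta'}$ from a single cyclic sequence. Since $G$ is a plane near-triangulation and $C'$ is the exterior boundary of $H$, each edge $a_ia_{i+1}$ of $C$ lies on exactly one interior triangular face, whose third vertex is a common neighbour of $a_i$ and $a_{i+1}$ lying on $C'$; I call this vertex the \emph{apex} $f(i)\in T$ of the edge. This produces a cyclic sequence $f(0),f(1),\ldots,f(s)$ of vertices of $T$, and the idea is to count its change-points and its maximal constant runs in two different ways.

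First I would show that $f$ changes value exactly at the $\alpha$ and $\alpha'$ vertices. At a $\gamma$ vertex $a_i$, whose only neighbour in $T$ is some $b_j$, the two faces on $a_{i-1}a_i$ and $a_ia_{i+1}$ both have apex $b_j$, so $f(i-1)=f(i)$. At an $\alpha$ or $\alpha'$ vertex $a_i$, the neighbours of $a_i$ in $C'$ are consecutive vertices $b_l,\ldots,b_m$ with $l<m$ (Observation~\ref{obs:neighbour}), and the faces on $a_{i-1}a_i$ and $a_ia_{i+1}$ take as apices the two extreme neighbours $b_l$ and $b_m$, which are distinct, so $f(i-1)\neq f(i)$. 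Hence the number of change-points of $f$ equals $N_\alpha+N_{\alpha'}$. For the maximal runs, a vertex $b_j$ occurs as an apex precisely when it has at least two neighbours in $S$, i.e.\ when $b_j$ is a $\beta$ or $\beta'$ vertex; and since the neighbours of $b_j$ in $C$ are consecutive (Observation~\ref{obs:neighbour}), the edges of $C$ with apex $b_j$ are exactly the edges joining successive neighbours of $b_j$ and therefore form a single maximal run. Thus the maximal constant runs of $f$ are in bijection with the $\beta$ and $\beta'$ vertices, giving $N_\beta+N_{\beta'}$ runs. In a cyclic sequence that is not globally constant the number of maximal runs equals the number of change-points; as $N_\alpha+N_{\alpha'}$ is odd by Corollary~\ref{cor:numalpha}, the sequence $f$ is not constant, and we conclude $N_\alpha+N_{\alpha'}=N_\beta+N_{\beta'}$.

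The main obstacle I anticipate is the careful local analysis of $f$ at a fan: verifying that the two bounding faces of the fan of an $\alpha/\alpha'$ vertex select precisely the two extreme neighbours $b_l,b_m$ and that these are distinct, and dually that the apex-edges of a $\beta/\beta'$ vertex form one contiguous block. Both rely on the planarity of $G$, the absence of separating triangles (Observation~\ref{Obs:triangle}), and the consecutiveness statements of Observation~\ref{obs:neighbour}; one must also confirm that every apex actually lies on $C'$, so that $f$ is well defined and $T$-valued. Once these local facts are established, the counting of change-points and runs is routine.
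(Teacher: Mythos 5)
Your proof is correct. Parts 1 and 2 are handled exactly as the paper intends (the paper just says they are ``easy to see''), but for Part 3 you take a genuinely different route. The paper's proof is a two-line double count: it forms the bipartite graph $G''$ on $V_\alpha\cup V_{\alpha'}$ versus $V_\beta\cup V_{\beta'}$ whose edges are the adjacencies between these two sets, observes via Observation~\ref{obs:gamma} that $G''$ is $2$-regular, and concludes that the two sides have equal size. You instead walk around $C$, assign to each edge $a_ia_{i+1}$ the apex of its unique interior triangular face, and equate the number of change-points of this cyclic sequence (one per $\alpha/\alpha'$ vertex) with the number of maximal constant runs (one per $\beta/\beta'$ vertex). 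The two arguments are in fact closely related: the two ``extreme'' apices $b_l,b_m$ at an $\alpha/\alpha'$ vertex are precisely its two non-$\delta$ neighbours from Observation~\ref{obs:gamma}(1), so your change-point structure is exactly the edge set of the paper's $G''$, with your run decomposition witnessing its $2$-regularity on the other side. What the paper's version buys is brevity, since it leans entirely on Observation~\ref{obs:gamma} and needs no discussion of faces or cyclic order; what yours buys is an explicit geometric picture (and, in passing, a re-derivation of Observation~\ref{obs:gamma}(1) from planarity and the absence of separating triangles), at the cost of the extra local verifications you correctly identify --- well-definedness of the apex, the fan structure at each $\alpha/\alpha'$ vertex, contiguity of each apex block, and non-constancy of the sequence, which you rightly settle via Corollary~\ref{cor:numalpha}. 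All of these checks go through, so the argument is sound.
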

\begin{proof}
The first two parts are easy to see. 
To prove the third part, consider the bipartite subgraph $G''=(V'', E'')$ of $G'$ with vertex set 
$V''=V_{\alpha} \cup V_{\alpha'}\cup V_{\beta}\cup V_{\beta'}$ and edge set 
$E''=\{a_i b_j \colon a_i \in V_{\alpha} \cup V_{\alpha'}, b_j \in V_{\beta}\cup V_{\beta'} \}$.
By Observation~\ref{obs:gamma}, every $\alpha$ and $\alpha'$ vertex has exactly two neighbours in $T$ that are either $\beta$ vertices 
or $\beta'$ vertices. 
Similarly, each $\beta$ and $\beta'$ vertex have exactly two neighbours in $S$ which are either $\alpha$ 
or $\alpha'$ vertices. Thus, $G''$ is a $2$-regular bipartite graph. Hence,  $N_{\alpha}+N_{\alpha'}=N_{\beta}+N_{\beta'}$.
\end{proof}
Combining Lemma~\ref{lem:numgamma}, Corollary~\ref{cor:numalpha} and Lemma~\ref{obs:Nalphabeta}, 
we see that $|T|=N_\delta+N_{\beta}+N_{\beta'}$ is odd. This means that $C'$ is an odd cycle. As $C'$ cannot be an odd hole and we 
have assumed that it is not a triangle, we have:
\begin{Observation}\label{obs:chord}
 $C'$ is an odd cycle with at least one chord in $G$ connecting vertices in $C'$. 
\end{Observation}

\begin{figure}[!htb]
\centering
 \includegraphics[scale=0.50]{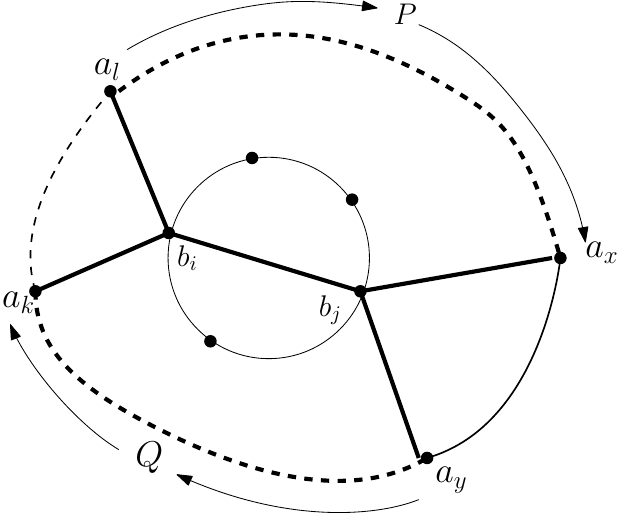}
\caption{$b_i \in V_{\beta'}$ and $b_j \in V_\beta$.  }
\label{fig2}
\end{figure}
The following lemma shows that any chord in $C'$ must have a $\delta$ vertex as one of its end points.
\begin{lemma}\label{lem:multiC}
If $b_ib_j$ is a chord in $C'$ then $\{b_i,b_j\} \cap V_\delta \ne \emptyset$.
\end{lemma}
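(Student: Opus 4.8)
The plan is to prove the contrapositive by structural analysis: suppose $b_ib_j$ is a chord in $C'$ with neither endpoint a $\delta$ vertex, and derive a contradiction with the minimality of $C$. Since neither $b_i$ nor $b_j$ is a $\delta$ vertex, both have degree $\geq 4$ in $G'$, so each is either a $\beta$ or $\beta'$ vertex. By Observation~\ref{Obs:triangle} (applied from the $S$-side via the symmetric statement) and the absence of separating triangles, the chord $b_ib_j$ together with the interior and exterior arcs of $C'$ between $b_i$ and $b_j$ partitions the situation into two regions, and I would exploit the common-neighbour structure guaranteed by Observation~\ref{obs:neighbour}.

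The key steps, in order, would be as follows. First I would set up the two arcs of $C'$ determined by the chord $b_ib_j$ and identify which vertices of $S$ are adjacent to $b_i$ and to $b_j$; by Observation~\ref{obs:neighbour}(2) these neighbourhoods on $C$ are consecutive blocks $a_k,\ldots,a_r$ and $a_{k'},\ldots,a_{r'}$ respectively. Second, since $b_i,b_j\in V_\beta\cup V_{\beta'}$, each has at least two neighbours in $S$, and by Observation~\ref{obs:gamma}(1) each has exactly two neighbours in $T$ that are not $\gamma$-vertices; combined with the chord $b_ib_j$, this over-constrains the planar layout. Third, I would use the chord to construct an alternative cycle: replacing a subpath of $C$ passing between the two neighbourhood-blocks with a short detour through $b_i, b_j$ (using the chord edge $b_ib_j$ and the connecting edges to $S$). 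The parity lemmas (Lemma~\ref{evenC'}, Lemma~\ref{evenC}) force the relevant subpaths to have even length, so the resulting cycle is an odd hole lying in $S\cup Int(C)$, contradicting the minimality of $C$ exactly as in the proofs of Lemma~\ref{evenC'} and Lemma~\ref{evenC}.

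The main obstacle I anticipate is verifying that the alternative cycle constructed through the chord is genuinely \emph{induced} (a hole), rather than merely a cycle; this is the same delicate point that arises in Lemma~\ref{evenC}, where one must rule out stray chords such as $a_ka_j$ by invoking minimality of $C$ again. Specifically, I would need to confirm that no vertex of $S$ adjacent to both $b_i$ and $b_j$ creates a shortcut that either destroys inducedness or forms a separating triangle with the chord. Here the degree-$\geq 4$ assumption on $b_i$ and $b_j$ (i.e.\ that neither is a $\delta$ vertex) is exactly what supplies enough neighbours in $S$ to guarantee that the detour has the required length and parity, while the no-separating-triangle hypothesis prevents the chord $b_ib_j$ from closing a triangle with a common $S$-neighbour.

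Finally, I would confirm the parity bookkeeping: because $C'$ is an odd cycle (Observation~\ref{obs:chord}) and the arcs split by the chord together with the detour recombine to a cycle whose length has the opposite parity to a genuine even-length replacement, the constructed cycle is odd. Combining the evenness of the replaced subpath of $C$ with the even length of the detour through $b_i,b_j$ (justified by the $\beta/\beta'$ parity from Lemma~\ref{evenC'}) yields an odd hole strictly inside $C\cup Int(C)$, the desired contradiction. Hence any chord of $C'$ must have a $\delta$ vertex as an endpoint.
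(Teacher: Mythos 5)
Your proposal follows essentially the same route as the paper: assume neither endpoint is a $\delta$ vertex, use the consecutiveness of each endpoint's neighbours on $C$, rule out a shared extreme neighbour via the no-separating-triangle hypothesis, and then close one arc of $C$ with the three-edge detour $a\,b_i\,b_j\,a'$ to manufacture an odd hole inside $C\cup Int(C)$, contradicting minimality. One bookkeeping correction: the detour through $b_i,b_j$ has length $3$ (odd), not even; the correct parity argument is that each endpoint's neighbour-block contributes an odd-length subpath of $C$ (since the number of neighbours on $C$ is even by Lemma~\ref{evenC'}), so exactly one of the two arcs of $C$ joining the blocks is even, and appending the length-$3$ detour to \emph{that} arc yields the odd hole --- your ``even replaced subpath plus even detour'' phrasing happens to give the right parity only because the two misstatements cancel. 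Inducedness is also simpler than you fear: the retained arc is an induced subpath of the hole $C$, and each of $b_i,b_j$ meets it only at one endpoint, so no extra chord-elimination argument (as in Lemma~\ref{evenC}) is needed.
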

\begin{proof}
Suppose $\{b_i,b_j\} \cap V_\delta =\emptyset$ and $i < j$. 
Let $a_k,a_{k+1},\ldots,a_l$ be the neighbours of $b_i$ in $C$ and let $a_x,a_{x+1},\ldots,a_y$ be the neighbours of $b_j$ in $C$,  
both taken in clockwise order (see Figure~\ref{fig2}). As $b_ib_j$ is a chord, there exists at least one vertex between $b_i$ and $b_j$ in $C'$ 
(in both directions). Hence $a_l \neq a_x$ (otherwise the vertices $a_x,b_i,b_j$ will form a separating triangle). 
Similarly, $a_k \neq a_y$. As $b_i$ and $b_j$ have even number of neighbours in $C$, it is easy to see that either 
the clockwise path $P$ from $a_l$ to $a_x$ or the clockwise path $Q$ from $a_y$ to $a_k$ in $C$ must be even, 
for otherwise $C$ cannot be an odd hole. With no loss of generality, assume that the length of $P$ is even. 
Then the vertices in the path $P$ along with the edges $\{a_xb_j,b_jb_i,b_ia_l\}$ forms an odd hole consisting of vertices chosen only 
from $S$ and $Int(C)$, a contradiction.
\end{proof}
By Lemma~\ref{lem:multiC}, $C'$ contains at least one chord connecting a vertex 
$b_j\in V_\delta$ to some other vertex in $C'$. By Observation~\ref{obs:gamma}, the neighbour of $b_j$ in $C$ must be an $\alpha'$ vertex.  Hence,  
 \begin{Corollary}\label{cor:nalpha1}
There exists at least one $\alpha'$ vertex in $C$. 
\end{Corollary}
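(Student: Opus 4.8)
The plan is to chain together the three structural facts established immediately before the statement; no genuinely new work is required, so the goal is simply to route the existing results correctly. The target is to exhibit a single vertex $a_i\in S$ with $deg_{G'}(a_i)>4$, that is, an $\alpha'$ vertex, and I would obtain it by first locating a $\delta$ vertex on $C'$ and then reading off its forced neighbour in $S$.

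First I would invoke Observation~\ref{obs:chord}, which guarantees that $C'$ is an odd cycle carrying at least one chord of $G$ between two of its vertices. This is the indispensable input: were $C'$ chordless it would be an odd hole lying entirely within $Int(C)$ and distinct from $C$, contradicting the minimality of $C$, and combined with the standing assumption that $C'$ is not a triangle the preceding parity count forces a chord to exist. Next I would apply Lemma~\ref{lem:multiC} to that chord: the lemma asserts that any chord $b_ib_j$ of $C'$ has at least one endpoint in $V_\delta$, so I may fix $b_j\in V_\delta$ (the choice is unambiguous even if both endpoints happen to be $\delta$ vertices, since only one is needed). Finally I would use the second part of Observation~\ref{obs:gamma}: a $\delta$ vertex has exactly one neighbour in $S$, and that neighbour is necessarily an $\alpha'$ vertex. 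The unique $S$-neighbour of $b_j$ is therefore the required $\alpha'$ vertex on $C$.

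The argument carries no real obstacle, since each step is a direct citation of a prior result. The only point demanding care is definitional bookkeeping, namely recognizing that an ``$\alpha'$ vertex in $C$'' means precisely a vertex of $S=V(C)$ whose $G'$-degree exceeds $4$, and that the chord furnished by Observation~\ref{obs:chord} is the very same object to which Lemma~\ref{lem:multiC} applies. Verifying that these identifications line up is the whole content of the proof.
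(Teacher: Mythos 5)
Your proposal is correct and follows exactly the paper's own route: Observation~\ref{obs:chord} supplies a chord of $C'$, Lemma~\ref{lem:multiC} forces one endpoint to be a $\delta$ vertex, and part (2) of Observation~\ref{obs:gamma} identifies that vertex's unique neighbour in $S$ as an $\alpha'$ vertex. Nothing is missing.
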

The next lemma shows that if a chord in $C'$ connects 
two $\delta$ vertices, then their neighbours in $C$ will be adjacent.
\begin{lemma}\label{lem:gammachord}
Let $\{b_i,b_j\}\subseteq V_\delta$. Let $a_x$ and $a_y$ be the neighbours of $b_i$ and $b_j$ respectively in $C$. 
If $b_ib_j$ is a chord in $C'$ (in $G$), then $a_xa_y$ is an edge in $C$. 
\end{lemma}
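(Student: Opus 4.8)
The goal is Lemma~\ref{lem:gammachord}: given two $\delta$ vertices $b_i,b_j$ whose unique neighbours on $C$ are $a_x,a_y$ respectively, if $b_ib_j$ is a chord of $C'$ then $a_xa_y\in E(C)$. Let me first recall what I have to work with. Each $\delta$ vertex has degree exactly $3$ in $G'$: its two neighbours on $C'$ (namely $b_{i-1},b_{i+1}$ and $b_{j-1},b_{j+1}$) and a single neighbour on $C$ (here $a_x$ and $a_y$), which by Observation~\ref{obs:gamma} is an $\alpha'$ vertex. So the configuration is quite rigid. I want to show $a_x$ and $a_y$ are consecutive on $C$, i.e. $a_y=a_{x\pm1}$.

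**The plan.**

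The plan is to argue by contradiction and exhibit a new odd hole lying inside $C\cup Int(C)$, contradicting the minimality of $C$ — this is exactly the engine that has driven every preceding lemma (Lemmas~\ref{evenC'}, \ref{evenC}, \ref{lem:multiC}), so I would reuse that machinery. Suppose $a_xa_y$ is not an edge of $C$. Then on the cycle $C$, the vertices $a_x$ and $a_y$ split $C$ into two internally disjoint arcs, at least one of which — call it $P$ — has length $\ge 2$. First I would use the chord $b_ib_j$ together with the edges $b_ia_x$ and $b_ja_y$ to form a short path $a_x\,b_i\,b_j\,a_y$ of length $3$ (using that $b_i,b_j$ are $\delta$ vertices with single $C$-neighbours $a_x,a_y$, and $b_ib_j$ is the given chord). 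Replacing one of the two arcs of $C$ between $a_x$ and $a_y$ by this length-$3$ detour through $b_i,b_j$ produces a cycle $C^*$ whose vertices lie in $S\cup Int(C)$, and the whole point is to choose the arc so that $C^*$ is an \emph{odd induced} cycle of length $\ge 5$, which then contradicts minimality of $C$.

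**Carrying it out.**

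To get the parity right, I would observe that the two arcs of $C$ joining $a_x$ to $a_y$ have lengths summing to the odd number $|C|$, so exactly one arc, say the clockwise arc $P$ from $a_x$ to $a_y$, is even and the other is odd. The detour $a_x\,b_i\,b_j\,a_y$ has length $3$ (odd). Replacing the \emph{odd} arc of $C$ by this odd detour keeps the total length odd: the resulting closed walk has length (even arc) + $3$, which is odd, and since $|P|\ge 2$ gives the even arc at least two edges, the new cycle has length at least $5$. I must then check $C^*$ is genuinely a chordless induced cycle: the retained arc $P$ is an induced path because it is a subpath of the hole $C$; the three new vertices $a_x,b_i,b_j,a_y$ introduce no chords because $b_i,b_j$ each have exactly one neighbour on $C$ (so $b_i$ is adjacent only to $a_x$ among $C$-vertices and $b_j$ only to $a_y$), and $a_xa_y\notin E$ by our contradiction hypothesis. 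This yields an odd hole distinct from $C$ inside $C\cup Int(C)$, the desired contradiction.

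**Expected obstacle.**

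The main delicacy is not the parity bookkeeping but verifying that $C^*$ has no chords — in particular ruling out that $b_i$ or $b_j$ is adjacent to a vertex of the retained arc $P$ other than its designated endpoint. This is where I would lean hardest on the fact that $b_i,b_j\in V_\delta$ have degree exactly three in $G'$, so each has a \emph{single} neighbour on $C$; combined with $G$ having no separating triangle, this forecloses unwanted adjacencies. A secondary point to handle cleanly is the degenerate possibility that the two arcs share more than the endpoints or that $a_x=a_y$, which cannot occur since distinct $\alpha'$ vertices have distinct labels and $b_ib_j$ being a genuine chord forces $b_i\ne b_j$; I would dispatch these boundary cases first so the parity argument applies to a well-defined pair of arcs.
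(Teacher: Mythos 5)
Your main argument coincides with the paper's proof: assume $a_xa_y\notin E(C)$, observe that exactly one of the two arcs of $C$ between $a_x$ and $a_y$ has even length, and replace the odd arc by the length-$3$ path $a_x b_i b_j a_y$ to obtain an odd induced cycle of length at least $5$ with vertices in $S\cup Int(C)$, contradicting the minimality of $C$. Your chordlessness check (each of $b_i,b_j$ has a unique neighbour on $C$ since they are $\delta$ vertices, and $a_xa_y\notin E$ by the contradiction hypothesis) is exactly what is needed and is correct.

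The one genuine flaw is your dismissal of the degenerate case $a_x=a_y$. You assert it cannot occur because ``distinct $\alpha'$ vertices have distinct labels and $b_ib_j$ being a genuine chord forces $b_i\ne b_j$,'' but $b_i\ne b_j$ does not imply $a_x\ne a_y$: two distinct $\delta$ vertices can perfectly well share the same $\alpha'$ vertex as their unique neighbour on $C$. Indeed, by Lemma~\ref{evenC} an $\alpha'$ vertex has degree at least $6$ in $G'$ and hence, by Observation~\ref{obs:gamma}, has at least two $\delta$ neighbours, so this configuration is the norm rather than an impossibility. Since the entire arc-replacement argument collapses when $a_x=a_y$ (there are no two arcs to choose between), this case must be excluded by a different argument: if $a_x=a_y$, then $b_i,a_x,b_j$ form a triangle, and because $b_ib_j$ is a chord of $C'$ there is at least one vertex of $C'$ strictly between $b_i$ and $b_j$ enclosed by that triangle, making it a separating triangle and contradicting that $G$ is a W-triangulation. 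This is precisely how the paper handles it. With that repair, your proof is complete and is essentially the paper's proof.
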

\begin{proof}
Let $\{b_i,b_j\} \subseteq V_\delta$ and suppose $b_ib_j$ is a chord in $G$ connecting vertices of $C'$. 
Let $a_x$ and $a_y$ be the unique neighbours of $b_i$ and $b_j$ respectively in $C$ (Observation~\ref{obs:gamma}).  
It follows that $a_x\neq a_y$, as otherwise the vertices $b_i,a_x,b_j$ will form a separating triangle in $G$ (See Figure~\ref{fig3}(a)). 
Since $C$ is an odd hole, exactly one of the two paths from $a_x$ to $a_y$ through vertices in $C$ must be of even length. 
With no loss of generality, assume that the path $P$ from $a_y$ to $a_x$ in the clockwise direction is even (see Figure~\ref{fig3}(b)). 
Then if $a_xa_y$ is not an edge in $C$, the edges in the path $P$ together with the edges $a_xb_i,b_ib_j$ and $b_ja_y$ forms an odd hole consisting 
of vertices chosen only from $S$ and $Int(C)$ (see Figure~\ref{fig3}(b), (c)), a contradiction.
\end{proof}

 \begin{figure}[!htb]
\centering
 \includegraphics[scale=0.60]{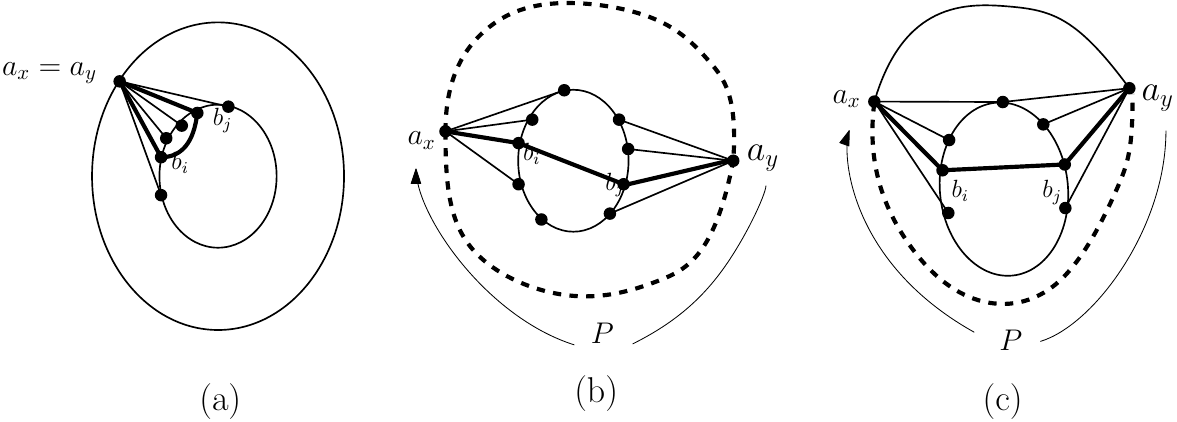}
\caption{(a) $a_x=a_y$ (b) $a_xa_y$ is not an edge in $G$ (c) $a_xa_y$ is an edge}
\label{fig3}
\end{figure}

Next we bound the number of $\alpha$ and $\alpha'$ vertices in $C$. A bound on the number of $\beta$ and $\beta'$ vertices follows from this.
  \begin{lemma}\label{nalphabeta3}
 $N_\alpha+ N_{\alpha'}\leq 3$
 \end{lemma}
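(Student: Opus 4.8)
The plan is to argue by contradiction using the minimality of the odd hole $C$. By Corollary~\ref{cor:numalpha} the number $N_\alpha+N_{\alpha'}$ is odd, and by Corollary~\ref{cor:nalpha1} it is at least $1$; so to prove $N_\alpha+N_{\alpha'}\le 3$ it suffices to assume $N_\alpha+N_{\alpha'}\ge 5$ and produce a smaller odd hole inside $C\cup Int(C)$. First I would make the \emph{fan structure} explicit. By Observations~\ref{obs:neighbour} and~\ref{Obs:triangle}, each $\alpha$ or $\alpha'$ vertex $a_i$ is adjacent to a block $b_l,\dots,b_m$ of consecutive vertices of $C'$ whose extreme members $b_l,b_m$ are the non-$\delta$ vertices shared with the two neighbouring $\alpha/\alpha'$ vertices, while the interior members $b_{l+1},\dots,b_{m-1}$ are exactly the $\delta$ vertices attached to $a_i$ (Observation~\ref{obs:gamma}). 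Since $deg_{G'}(a_i)$ is even, this block spans an odd number of edges of $C'$: an $\alpha$ vertex contributes the single edge $b_lb_{l+1}$, whereas an $\alpha'$ vertex contributes an odd arc of length $\ge 3$ carrying at least two interior $\delta$ vertices. Reading $C'$ cyclically thus decomposes it into $N_\alpha+N_{\alpha'}$ odd ``fan arcs'' glued at the $\beta/\beta'$ vertices.

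The engine of the argument is a parity-preserving shortening move. If $a_i$ is an $\alpha'$ vertex with fan arc $b_l\dots b_m$, then because $a_i$ is adjacent to both $b_l$ and $b_m$, replacing this odd arc by the length-$2$ path $b_l\,a_i\,b_m$ flips the parity of the cycle while strictly deleting the interior $\delta$ vertices. Applying this move to two distinct $\alpha'$ vertices restores odd parity and yields a strictly shorter odd closed walk whose vertices all lie in $S\cup T\subseteq C\cup Int(C)$. Since $N_\alpha+N_{\alpha'}$ may be large even when only one $\alpha'$ vertex is present, I would combine this with the chord that $C'$ is guaranteed to contain (Observation~\ref{obs:chord}): by Lemma~\ref{lem:multiC} it has a $\delta$ endpoint, and rerouting across this $\delta$ vertex gives a second, independent parity-flipping shortcut. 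Arranging an even number of such moves keeps the resulting cycle odd, shorter than $C$, and supported on $S\cup Int(C)$.

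Finally I would reduce the resulting odd cycle to an induced one: any chord splits it into a shorter odd and a shorter even cycle, and I retain the odd side. Because $G$ has no separating triangle, every triangle is a face, and Lemma~\ref{lem:gammachord} together with the observation that two $\delta$ vertices of the \emph{same} fan can never be joined by a chord (their unique $C$-neighbour would have to be adjacent to itself) prevents this reduction from collapsing to a triangle; it therefore terminates at an induced odd hole of length $\ge 5$ inside $C\cup Int(C)$, contradicting the minimality of $C$. The bound $N_\beta+N_{\beta'}\le 3$ then follows from Lemma~\ref{obs:Nalphabeta}(3). I expect the main obstacle to be exactly this induced-ness bookkeeping: one must verify that when $N_\alpha+N_{\alpha'}\ge 5$ enough shortenable fans (or a usable chord) are simultaneously available so that the constructed odd cycle is genuinely shorter than $C$, and that the chord reduction cannot bottom out at a triangle. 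I would address this by selecting the two reduction sites as far apart as possible on $C'$ and invoking Observation~\ref{Obs:triangle} to enumerate every chord incident to a fan, so that the few cycles that might shrink to a triangle are explicitly excluded by the even-degree hypothesis and the absence of separating triangles.
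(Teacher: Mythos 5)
Your overall strategy---contradicting the minimality of $C$ by exhibiting another odd hole inside $C\cup Int(C)$---matches the paper, and your ``fan arc'' decomposition of $C'$ is a genuinely different starting point: you propose to modify $C'$ by replacing whole fan arcs with two-edge detours through $\alpha'$ vertices, whereas the paper replaces a single arc of $C$ by one explicitly chosen even path $a_g b_p\cdots b_q a_i b_r\cdots b_s a_h$ threading five consecutive members of $V_\alpha\cup V_{\alpha'}$, and closes it along $C$. However, there are two genuine gaps. First, the case $N_{\alpha'}=1$ (so $N_\alpha=4$ when the sum is $5$) is not actually handled: your main move needs two distinct $\alpha'$ vertices, and the fallback of ``rerouting across'' the $\delta$ endpoint of the chord guaranteed by Observation~\ref{obs:chord} is never made precise---you do not say which arc of $C'$ the chord replaces, why that arc has the parity you need, or why the result remains a cycle after you also contract a fan. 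When $N_{\alpha'}=1$, every $\delta$ vertex lies in that single $\alpha'$ fan (Observation~\ref{obs:gamma}), so the chord and the fan you want to contract interact rather than being independent; this is essentially the configuration of Lemma~\ref{lem:nalpha0}, which arises only under $N_\alpha=0$ and does not cover your situation.

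Second, and more seriously, the induced-ness step is deferred rather than solved. Your candidate odd cycle can certainly carry chords: Lemma~\ref{lem:multiC} only forces one endpoint of a $C'$-chord to be a $\delta$ vertex, and $\beta'$--$\delta$ chords genuinely occur (part (b) of Lemma~\ref{lem:nalpha0} exhibits one). Iteratively retaining ``the odd side'' of a chord can perfectly well terminate at a triangle, in which case no contradiction with the minimality of $C$ is obtained; asserting that Lemma~\ref{lem:gammachord} and the absence of separating triangles ``prevent'' this is not a proof, since a facial triangle is still a triangle and not an odd hole. Ruling out every possible chord of the candidate cycle is exactly where the paper's proof spends almost all of its effort (four cases, each eliminated by constructing yet another odd hole from parity of the relevant subpaths), and your proposal explicitly names this as the main obstacle without resolving it. As written, the argument is incomplete at precisely the step that carries the difficulty of the lemma.
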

 \begin{proof}
  Suppose $N_\alpha+ N_{\alpha'}> 3$. As $N_\alpha+ N_{\alpha'}$ is odd (by Corollary~\ref{cor:numalpha}), 
 $N_\alpha+ N_{\alpha'}\geq 5$. 
 By Corollary~\ref{cor:nalpha1}, an $\alpha'$ vertex $a_i$ must exist in $C$. 
 By the definition of an $\alpha'$ vertex and by Lemma~\ref{evenC}, we know that $a_i$ has an an even number ($\ge 6$) of neighbours in $G'$, of which
 exactly two are on $C$.
 Without loss of generality, let $b_q,b_{q+1},\ldots,b_r$ ($r\geq q+3$) 
 be the neighbours of $a_i$ in  $T$ (see Figure~\ref{fig4}(a)). Let the neighbours of $b_q$ and $b_r$ in $C$ in clockwise 
 order be $a_x,a_{x+1}\ldots, a_i$ and $a_i,a_{i+1},\ldots,a_y$ respectively. 
 Clearly, $a_x$ and $a_y$ are in $V_\alpha \cup V_{\alpha'}$. If $x=y$, then $V_\alpha \cup V_{\alpha'}=\{a_x, a_i\}$, contradicting our assumption
 that $N_\alpha+ N_{\alpha'} \ge 5$. Further, since $a_x, a_i$ and $a_y$ are the only vertices in $V_\alpha \cup V_{\alpha'}$ in the clockwise subpath of 
 $C$ from $a_x$ to $a_y$, at least two internal vertices of the clockwise subpath of $C$ from $a_y$ to $a_x$ must belong to 
 $V_\alpha \cup V_{\alpha'}$, since $N_\alpha+ N_{\alpha'} \ge 5$.

 Let $b_p,b_{p+1},\ldots, b_{q}$ and 
 $b_r,b_{r+1},\ldots,b_{s}$ be the neighbours of $a_x$ and $a_y$ in $C'$ respectively in clockwise order. 
 Note that $b_p$ and $b_s$ are distinct, non-adjacent vertices in $C'$.  This is because $N_{\beta}+N_{\beta'}=N_{\alpha}+N_{\alpha'}\geq 5$
 (part (3) of Lemma~\ref{obs:Nalphabeta}).  
 Let the neighbours 
 of $b_p$ and $b_s$ in $C$ be $a_g,a_{g+1},\ldots,a_x$ and $a_y,a_{y+1},\ldots,a_h$ respectively. 
 As $N_\alpha+ N_{\alpha'}\geq 5$, $a_g \neq a_h$. Since $b_p,b_q,b_r$ and $b_s$ are elements of $V_\beta \cup V_{\beta'}$, 
 their degrees must be even (by Lemma~\ref{evenC'}). Hence the path from $a_g$ to $a_h$ through $a_x,a_i$ and $a_y$ along 
 vertices in $C$ is of even length. Consequently, as $C$ is an odd hole, the (chordless) path $L$ from $a_h$ to $a_g$ in clockwise order 
 in $C$ (as shown in Figure~\ref{fig4}(a)) must be of odd length ($\geq 1$). We will show that there exists another odd hole in $G$, 
 consisting of vertices chosen only from $S$ and $Int(C)$, contradicting the choice of $C$. 
 
Consider the path $P=a_gb_pb_{p+1}\ldots b_{q}a_ib_rb_{r+1}\ldots b_{s}a_h$. 
Since $a_x$ and $a_y$ are of even degree (by Lemma~\ref{evenC}), $P$ is of even length (see Figure~\ref{fig4}(a)). This path, together with the
with the path $L$, form an odd cycle of length at least seven. If we prove that this cycle is chordless, it will be an odd hole, 
consisting of vertices chosen only from $S$ and $Int(C)$, a contradiction to the choice of $C$ as desired.  

Hence we analyze the possible chords in $G$ for the cycle formed by edges of $P$ and the edges of $L$.  
By Observation~\ref{Obs:triangle}, there cannot be any chord in $P$ connecting any two vertices in the set $\{b_p,b_{p+1},\ldots b_q\}$ or any two vertices in the set $\{b_r,b_{r+1},\ldots,b_s\}$. For the same reason, a $b_qb_r$ chord also cannot exist. 
Moreover, since as $b_q,b_p, b_r$ and $b_s$ are elements in $V(\beta)\cup V(\beta')$, by Lemma~\ref{lem:multiC}, there will not be a chord between them.
Further, as there is no edge connecting $a_x$ and $a_y$, no chord exists between a vertex in the set $\{b_{p+1},\ldots b_{q-1}\}$ and a vertex in the set $\{b_{r+1}\ldots b_{s-1}\}$ (by Lemma~\ref{lem:gammachord}).  
Hence, chords in $P$ can exist only between a vertex in the set $\{b_p,b_q\}$ and a vertex in the set $\{b_{r+1},\ldots,b_{s-1}\}$ 
or between a vertex in the set $\{b_r,b_s\}$ and a vertex in the set $\{b_{p+1},\ldots ,b_{q-1}\}$. 
We systematically rule out these possibilities below.
\begin{itemize}
\item \textbf{Case 1} There exists a $b_pb_k$ chord in $C'$ for some 
$b_k \in \{b_{r+1}\ldots,b_{s-1}\}$ (Figure~\ref{fig4}(b)): Let $L'$ be the (chordless) path from $a_y$ to $a_g$ in 
the clockwise direction in $C$. 
Since $ b_s$ is of even degree (by Lemma~\ref{evenC'}), and $L'$ is obtained by combining the path from $a_y$ to $a_h$ along $C$ with the odd 
length path $L$, path $L'$ should be of even length.
Consequently, the path $L'$ along with the edges $a_gb_p,b_pb_k$ and $b_ka_y$ will induce an odd hole consisting of vertices chosen only from $S$ 
and $Int(C)$, a contradiction.

\item \textbf{Case 2} There exists a $b_sb_k$ chord in $C'$ for some 
$b_k \in \{b_{p+1}\ldots,b_{q-1}\}$: This case is symmetric to Case 1.
\item \textbf{Case 3} There exists a $b_qb_k$ chord in $C'$ for some 
$b_k \in \{b_{r+1}\ldots,b_{s-1}\}$ (Figure~\ref{fig4}(c)): Let the neighbours of $a_h$ in $C'$ be 
$b_s,b_{s+1},\ldots,b_m$ and the neighbours of $b_m$ on $C$ be  
$a_h,a_{h+1},\ldots,a_j$ . Note that since $N_\alpha + N_{\alpha'} \geq 5$, by part (3) of Lemma~\ref{obs:Nalphabeta}, $V_{\beta}\cup V_{\beta'}$ should 
contain at least five vertices. Hence we see that $b_m \neq b_p$. Consider the path $Q=a_xb_qb_{q+1}\ldots b_ra_yb_s\ldots b_ma_j$. 
As $a_i$ and $a_h$ are of even degree (by Lemma~\ref{evenC}), the path $Q$ is of even length. 
As $b_q,b_r,b_s$ and $b_m$ are elements in $V(\beta)\cup V(\beta')$, they have even number of neighbours in $C$ (by Lemma~\ref{evenC'}). 
Hence the path from $a_x$ to $a_j$ in clockwise direction in $C$ is of even length. Consequently, as $C$ is an odd hole, the path  $M$ 
from $a_j$ to $a_x$ in $C$ in clockwise order is of odd length. Suppose that the path $Q$ is chordless. 
Then combining the path $Q$ with $M$ yields an odd hole (see Figure~\ref{fig4}(c))  consisting of vertices chosen only from $S$ and $Int(C)$, 
a contradiction. Note that this is true even if  $a_j=a_g$ (i.e., $N_\alpha+ N_{\alpha'}=5$).

Thus it suffices to prove that the path $Q$ is chordless. We rule out each of the following possible cases of chords appearing in $Q$.
\begin{itemize}
\item[(a)] There exists a chord connecting a vertex in the set $\{b_{q+1},\ldots,b_r\}$ and 
a vertex in the set $\{b_s,\ldots,b_m\}$ in $C'$:  
As $b_qb_{k}$ is a chord in $C'$, this is impossible, as otherwise $G$ cannot be planar.
\item[(b)] There exists a $b_qb_m$ chord or a $b_qb_s$ chord: This possibility is ruled out by Lemma~\ref{lem:multiC}.
\item[(c)] There exists a chord $b_q b_l$ such that  $b_l \in \{b_{s+1},\ldots,b_{m-1}\}$ (Figure~\ref{fig4}(d)): 
Let $b_qb_l$ be a chord in $C'$ . Let $M'$ be the path from $a_h$ to $a_x$ in clockwise direction in $C$. 
Since $b_q,b_r$ and $b_s$ are elements in $V(\beta)\cup V(\beta')$, they have even number of neighbours in $C$ (by Lemma~\ref{evenC'}). 
Hence the path from $a_x$ to $a_h$ in clockwise direction in $C$ is of odd length. Consequently, as $C$ is an odd hole, the path $M'$ 
must have even length ($\geq 2$). Hence the path $M'$ along with the edges $a_xb_q,b_qb_l$ and $b_la_h$ will induce an odd hole consisting 
of vertices chosen only from $S$ and $Int(C)$, a contradiction. 
\end{itemize}
Thus we conclude that the path $Q$ is chordless, as required.
\item \textbf{Case 4} There exists a $b_rb_k$ chord in $C'$ for some 
$b_k \in \{b_{p+1}\ldots,b_{q-1}\}$: This case is symmetric to the Case 3.
\end{itemize} 
Hence we conclude that $N_\alpha+ N_{\alpha'}\leq 3$.
 \end{proof}
 
 \begin{figure}[!htb]
\centering
 \includegraphics[scale=0.750]{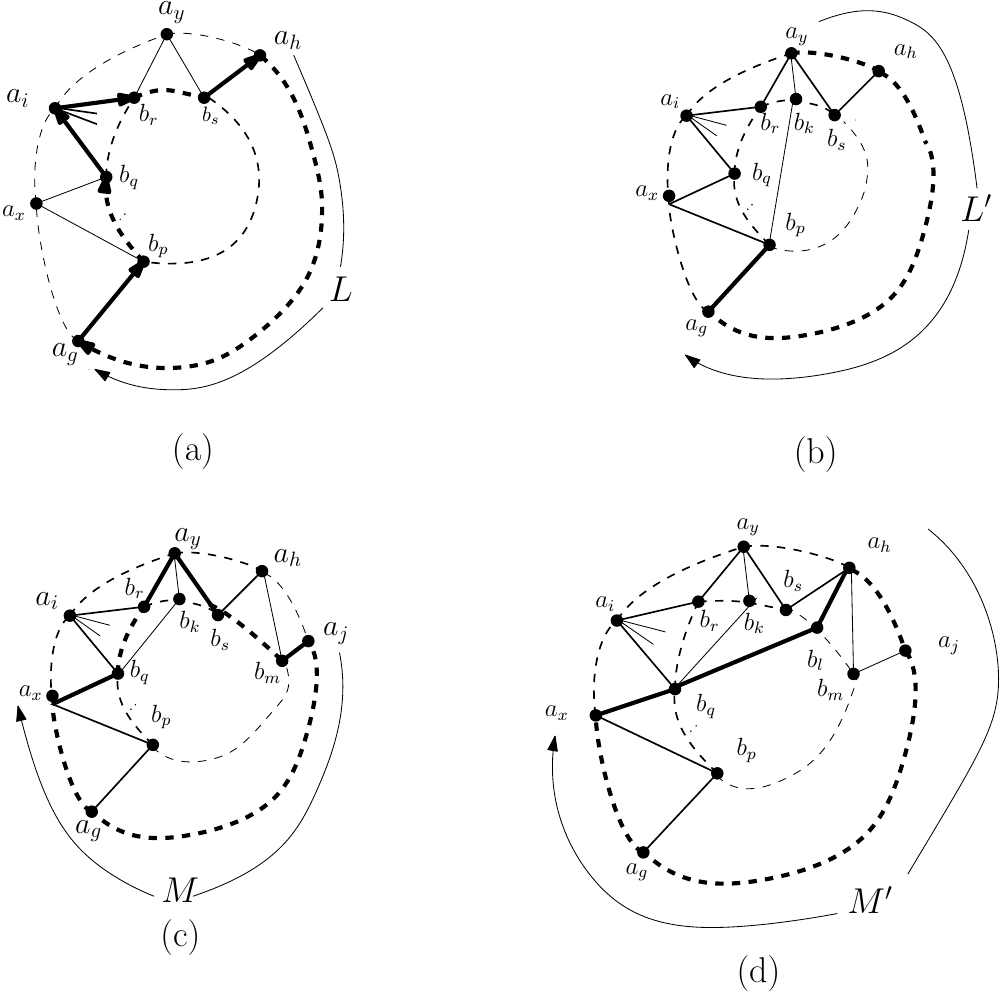}
\caption{$N_\alpha+ N_{\alpha'}\ge 5$.}
\label{fig4}
\end{figure}

By part (3) of Lemma~\ref{obs:Nalphabeta}, We have:
 \begin{Corollary}\label{nbeta3}
$N_\beta+ N_{\beta'}\le 3$.
 \end{Corollary}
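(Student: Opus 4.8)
The plan is to derive this bound directly from the two results that immediately precede it, with no new combinatorial argument required; all of the geometric work has already been carried out in Lemma~\ref{nalphabeta3}, which establishes the symmetric bound $N_\alpha + N_{\alpha'} \le 3$ on the vertices of $S$ whose $G'$-degree exceeds $3$. The key observation is that the quantities $N_\beta + N_{\beta'}$ and $N_\alpha + N_{\alpha'}$ are not merely comparable but in fact equal.

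Accordingly, the first step is to invoke part (3) of Lemma~\ref{obs:Nalphabeta}, which asserts the exact equality $N_\alpha + N_{\alpha'} = N_\beta + N_{\beta'}$. Recall that this identity was obtained by passing to the bipartite subgraph of $G'$ on $V_\alpha \cup V_{\alpha'} \cup V_\beta \cup V_{\beta'}$, observing via Observation~\ref{obs:gamma} that each side of the bipartition contributes exactly two incident edges per vertex, and concluding that the graph is $2$-regular and hence balanced. The second and final step is a substitution: since $N_\beta + N_{\beta'} = N_\alpha + N_{\alpha'}$ and $N_\alpha + N_{\alpha'} \le 3$ by Lemma~\ref{nalphabeta3}, we obtain $N_\beta + N_{\beta'} \le 3$, as claimed.

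I do not anticipate any genuine obstacle here; the statement is a transcription of Lemma~\ref{nalphabeta3} through the equality of part (3) of Lemma~\ref{obs:Nalphabeta}, and the proof is a single line. The only point worth flagging is purely bookkeeping, namely to ensure that the two cited facts are indeed available at this stage of the argument (they are, since both appear earlier in the section), so that the corollary can be stated as an immediate consequence rather than requiring an independent treatment of the $\beta$ and $\beta'$ vertices.
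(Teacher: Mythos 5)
Your proof is correct and is exactly the paper's argument: the corollary follows immediately by combining the equality $N_\alpha+N_{\alpha'}=N_\beta+N_{\beta'}$ from part (3) of Lemma~\ref{obs:Nalphabeta} with the bound $N_\alpha+N_{\alpha'}\le 3$ of Lemma~\ref{nalphabeta3}. Nothing further is needed.
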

Since $N_\alpha+ N_{\alpha'}\leq 3$ and $C$ is an odd hole, $C$ contains at least one $\gamma$ vertex which must be adjacent 
to a $\beta'$ vertex in $C'$ (by part 2 of Observation~\ref{obs:gamma}). Consequently we have:
 \begin{Corollary}\label{nbeta1}
 There exists at least one $\beta'$ vertex in $C'$. That is, $N_{\beta'} \ge 1$.
 \end{Corollary}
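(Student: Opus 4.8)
The plan is to obtain the $\beta'$ vertex from a short counting argument on $C$, combined with part~(2) of Observation~\ref{obs:gamma}. The guiding idea is that $C$ is long (at least five vertices) while $N_\alpha + N_{\alpha'}$ has just been bounded above by $3$ in Lemma~\ref{nalphabeta3}, so $C$ must contain several $\gamma$ vertices, each of which is forced to have a $\beta'$ neighbour.

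First I would record that $S$ partitions into $\gamma$, $\alpha$, and $\alpha'$ vertices, so that $|S| = N_\gamma + N_\alpha + N_{\alpha'}$. This is justified by Lemma~\ref{evenC} together with part~(5) of Observation~\ref{obs:neighbour}: every vertex of $S$ has degree $3$, $4$, or an even number exceeding $4$ in $G'$, and these are precisely the defining degrees of $\gamma$, $\alpha$, and $\alpha'$ vertices. Since $C$ is an odd hole we have $s \geq 4$, hence $|S| = s+1 \geq 5$.

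Next I would apply Lemma~\ref{nalphabeta3}, which gives $N_\alpha + N_{\alpha'} \leq 3$. Substituting into the identity above yields $N_\gamma = |S| - (N_\alpha + N_{\alpha'}) \geq 5 - 3 = 2$, so $C$ contains at least one $\gamma$ vertex. I would then invoke part~(2) of Observation~\ref{obs:gamma}: a $\gamma$ vertex has exactly one neighbour in $T$, and that neighbour is necessarily a $\beta'$ vertex. Hence at least one $\beta'$ vertex exists in $C'$, establishing $N_{\beta'} \geq 1$.

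I do not anticipate any genuine obstacle here, as the corollary is an immediate bookkeeping consequence of the already-proved Lemma~\ref{nalphabeta3} and the structural Observation~\ref{obs:gamma}; all of the difficult casework was expended in establishing the bound $N_\alpha + N_{\alpha'} \leq 3$. The only step deserving a moment's attention is the inequality $|S| \geq 5$, which relies on $C$ being an odd hole and therefore having length at least five.
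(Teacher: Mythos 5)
Your proof is correct and follows essentially the same route as the paper: the paper likewise deduces from $N_\alpha+N_{\alpha'}\leq 3$ (Lemma~\ref{nalphabeta3}) and $|S|\geq 5$ that $C$ contains a $\gamma$ vertex, whose unique neighbour in $T$ is a $\beta'$ vertex by part~(2) of Observation~\ref{obs:gamma}.
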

    \begin{lemma}\label{lem:alphabeta1}
Let $a_j$ be an $\alpha'$ vertex in $C$. Let $b_x,b_{x+1},\ldots,b_y$ be the neighbours of $a_j$ in $C'$ in clockwise order. 
Then at least one among $b_x$ and $b_y$ must be a $\beta$ vertex. 
  \end{lemma}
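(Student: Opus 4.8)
The plan is to argue by contradiction: assuming that neither $b_x$ nor $b_y$ is a $\beta$ vertex, I would manufacture a second odd hole inside $C\cup Int(C)$, contradicting the minimality of $C$. First I would record the local structure at the two extreme neighbours. Since $a_j$ is an $\alpha'$ vertex, Lemma~\ref{evenC} gives $deg_{G'}(a_j)\ge 6$, so the clockwise block $b_x,\ldots,b_y$ contains at least four vertices and $y\ge x+3$. The two faces of $G$ at $a_j$ bordering the $C$-edges $a_{j-1}a_j$ and $a_ja_{j+1}$ are the triangles $a_{j-1}a_jb_x$ and $a_ja_{j+1}b_y$, so $b_x$ is adjacent to both $a_{j-1}$ and $a_j$, and $b_y$ to both $a_j$ and $a_{j+1}$. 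Hence each of $b_x,b_y$ has at least two neighbours on $C$, so by Observation~\ref{obs:gamma} neither can be a $\delta$ vertex; each is therefore a $\beta$ or a $\beta'$ vertex.

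So suppose, for contradiction, that both $b_x$ and $b_y$ are $\beta'$ vertices. Writing the (consecutive, by Observation~\ref{obs:neighbour}) neighbours of $b_x$ on $C$ as $a_m,\ldots,a_j$ and those of $b_y$ as $a_j,\ldots,a_n$ in clockwise order, the definition of a $\beta'$ vertex together with Lemma~\ref{evenC'} forces each block to contain an even number ($\ge 4$) of vertices; thus $m\le j-3$, $n\ge j+3$, and each of the subpaths $a_m\cdots a_j$ and $a_j\cdots a_n$ of $C$ has odd length. Consequently the clockwise arc of $C$ from $a_m$ through $a_j$ to $a_n$ has even length, and its complementary long arc $L$ from $a_n$ back to $a_m$ has odd length. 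I would next rule out $a_m=a_n$: the endpoints $a_m,a_n$ (like $a_j$) each have at least two neighbours on $C'$ and hence lie in $V_\alpha\cup V_{\alpha'}$, whereas every vertex strictly inside the fan of $b_x$ or of $b_y$ has degree $3$ and is a $\gamma$ vertex. If $a_m=a_n$ these two fans would exhaust $S$, leaving only $a_j$ and $a_m$ in $V_\alpha\cup V_{\alpha'}$ and so $N_\alpha+N_{\alpha'}=2$, contradicting the oddness of $N_\alpha+N_{\alpha'}$ (Corollary~\ref{cor:numalpha}). Hence $a_m\ne a_n$ and $L$ has length $\ge 1$.

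The contradiction then comes from the cycle $K$ obtained by concatenating the path $a_m,b_x,a_j,b_y,a_n$ with the long arc $L$. Its length is $4+\mathrm{len}(L)$, which is odd and at least $5$, and all its vertices lie in $S\cup T\subseteq S\cup Int(C)$; so it only remains to verify that $K$ is chordless, and I expect this chord analysis, as in Lemma~\ref{nalphabeta3}, to be the crux of the argument. The vertices of $L$ carry no chords among themselves because $C$ is a hole; on $K$ the vertex $a_j$ sees only $b_x$ and $b_y$, since its remaining neighbours $a_{j\pm1},b_{x+1},\ldots,b_{y-1}$ have been bypassed and $m\le j-3$, $n\ge j+3$ keep $a_m,a_n$ off its neighbourhood; $b_x$ meets $K$ only in $a_m,a_j$ and $b_y$ only in $a_j,a_n$, because the consecutive $C$-neighbours of each have been cut away; and $b_xb_y\notin E(G)$ since $b_x,b_y$ are non-consecutive neighbours of $a_j$ on $C'$ (Observation~\ref{Obs:triangle}, as $y\ge x+3$). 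Therefore $K$ is an odd hole distinct from $C$ with all vertices in $S\cup Int(C)$, contradicting the minimality of $C$. Hence at least one of $b_x,b_y$ must be a $\beta$ vertex, as claimed.
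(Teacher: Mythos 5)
Your proof is correct and follows essentially the same route as the paper's: assume both extreme neighbours are $\beta'$ vertices, replace the even-length arc of $C$ from $a_m$ through $a_j$ to $a_n$ by the path $a_m b_x a_j b_y a_n$, and verify the resulting odd cycle is induced, contradicting the minimality of $C$. The only (harmless) deviations are cosmetic: you rule out $a_m=a_n$ by counting $N_\alpha+N_{\alpha'}$ where the paper uses the parity of $C$ directly, you exclude the chord $b_xb_y$ via Observation~\ref{Obs:triangle} rather than Lemma~\ref{lem:multiC}, and you make explicit (which the paper leaves implicit) why $b_x,b_y$ cannot be $\delta$ vertices.
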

  \begin{proof}
  Suppose that $b_x$ and $b_y$ are $\beta'$ vertices.  
  Let $a_i,a_{i+1}\ldots,a_j$ and $a_j,a_{j+1}\ldots, a_k$ ($j\geq i+3$, $k\geq j+3$) be the neighbours of $b_x$ and $b_y$ respectively in $C$, 
  considered in clockwise order (see Figure~\ref{fig5}(a)). As $b_x$ and $b_y$ have even number ($\geq 4$) of neighbours in $C$ (by Lemma~\ref{evenC'}), 
  the path $P$ from $a_i$ to $a_k$ through $a_j$ in $C$ is chordless and has even length ($\geq 6$). Since $C$ is an odd hole, we see that $a_i \neq a_k$. 
  By Lemma~\ref{lem:multiC}, there is no chord connecting $b_x$ and $b_y$ in $G$. 
  Replacing the path $P$ in $C$  by the edges $a_ib_x,b_xa_j,a_jb_y$ and $b_ya_k$ induces an odd hole in $G$ consisting of vertices chosen only from $S$ 
  and $Int(C)$, a contradiction. Hence at least one among $b_x$ and $b_y$ must be a $\beta$ vertex.
  \end{proof}
 
 \begin{figure}[!htb]
\centering
 \includegraphics[scale=0.75]{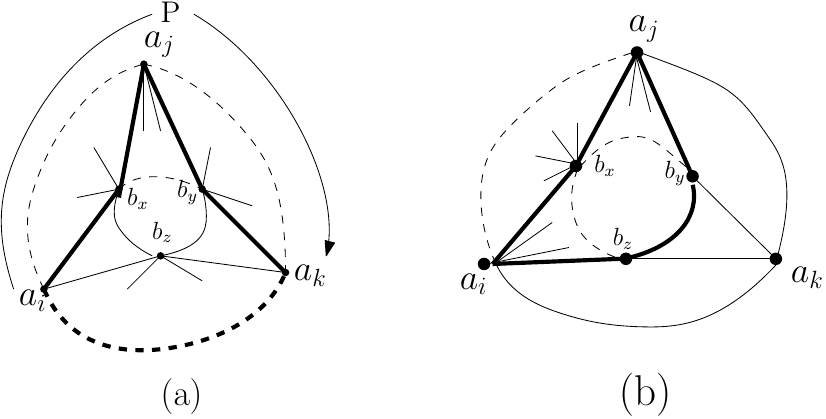}
\caption{(a) $N_{\beta'}=3$ and $N_\delta\neq 0$  (b) $b_x$ is a $\beta'$ vertex and both $a_i$ and $a_j$ are $\alpha'$ vertices. }
\label{fig5}
\end{figure}

Since there exists at least one $\alpha'$ vertex in $C$ (by Corollary~\ref{cor:nalpha1}), we have:
\begin{Corollary} \label{cor:nbeta1}
$N_\beta \ge 1$.
\end{Corollary}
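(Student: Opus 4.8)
The plan is to derive this statement directly from the two results immediately preceding it, as it is an immediate corollary rather than a fresh argument. First I would invoke Corollary~\ref{cor:nalpha1}, which guarantees the existence of at least one $\alpha'$ vertex in $C$; call it $a_j$. Writing its neighbours in $C'$ in clockwise order as $b_x, b_{x+1}, \ldots, b_y$ places us exactly in the setting of the hypothesis of Lemma~\ref{lem:alphabeta1}.

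Next I would apply Lemma~\ref{lem:alphabeta1} to the vertex $a_j$. That lemma asserts that at least one of the two extreme neighbours $b_x$ and $b_y$ of $a_j$ on $C'$ must be a $\beta$ vertex. Since a $\beta$ vertex is, by definition, a vertex of $T$ with $deg_{G'}=4$, this already exhibits at least one $\beta$ vertex in $C'$, whence $N_\beta \ge 1$ as claimed.

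I do not expect any genuine obstacle here. The only thing that must be verified is that the hypothesis of Lemma~\ref{lem:alphabeta1}, namely that the vertex to which it is applied is an $\alpha'$ vertex, is indeed supplied, and this is precisely the content of Corollary~\ref{cor:nalpha1}. No case analysis, construction, or parity argument is required beyond chaining these two facts together, so the proof is a single short deduction.
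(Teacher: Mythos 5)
Your proof is correct and is exactly the paper's argument: the paper derives this corollary in one line by taking the $\alpha'$ vertex guaranteed by Corollary~\ref{cor:nalpha1} and applying Lemma~\ref{lem:alphabeta1} to it, just as you do.
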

As $N_{\beta}\geq 1$ and $N_{\beta'} \geq 1$ (by Corollary~\ref{cor:nbeta1} and Corollary~\ref{nbeta3}), we have $N_{\beta}+N_{\beta'}\geq 2$. 
Moreover, by part (3) of Lemma~\ref{obs:Nalphabeta} and 
Corollary~\ref{cor:numalpha}, $N_{\beta}+N_{\beta'} (= N_{\alpha}+N_{\alpha'})$ is odd and by Corollary~\ref{nbeta3}, 
$N_{\beta}+N_{\beta'}\leq 3$. Thus we have: 
\begin{Observation}\label{Obs:nbeta2}
  $N_{\alpha}+N_{\alpha'}=N_{\beta}+N_{\beta'}=3$.
  \end{Observation}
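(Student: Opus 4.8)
The plan is to determine $N_\beta+N_{\beta'}$ exactly by trapping it between $2$ and $3$ and then discarding the value $2$ on parity grounds; the equality $N_\alpha+N_{\alpha'}=N_\beta+N_{\beta'}$ will then transfer the conclusion to the $\alpha$-side at no extra cost. First I would record the lower bound $N_\beta+N_{\beta'}\ge 2$. This is immediate from Corollary~\ref{cor:nbeta1}, which gives $N_\beta\ge 1$, together with Corollary~\ref{nbeta1}, which gives $N_{\beta'}\ge 1$. Both rest on the presence of an $\alpha'$ vertex on $C$ (Corollary~\ref{cor:nalpha1}): such a vertex produces an adjacent $\beta$ vertex via Lemma~\ref{lem:alphabeta1}, while the $\gamma$ vertex that must exist because $N_\alpha+N_{\alpha'}\le 3<|S|$ produces an adjacent $\beta'$ vertex by part~(2) of Observation~\ref{obs:gamma}.

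Next I would invoke the parity constraint and the upper bound. Part~(3) of Lemma~\ref{obs:Nalphabeta} gives $N_\beta+N_{\beta'}=N_\alpha+N_{\alpha'}$, and Corollary~\ref{cor:numalpha} asserts that the latter is odd, so $N_\beta+N_{\beta'}$ is odd as well. Corollary~\ref{nbeta3} supplies the ceiling $N_\beta+N_{\beta'}\le 3$. The only odd integer in the range $[2,3]$ is $3$, whence $N_\beta+N_{\beta'}=3$; applying part~(3) of Lemma~\ref{obs:Nalphabeta} once more yields $N_\alpha+N_{\alpha'}=3$, completing the argument.

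I do not anticipate any genuine difficulty at this stage: the substantive work has already been carried out in Lemma~\ref{nalphabeta3}, whose long case analysis delivers the upper bound, and in the chain of existence corollaries that secure the lower bound. What remains is a short bookkeeping step combining two inequalities with a parity observation, and the only care required is in attributing each ingredient to the correct earlier result.
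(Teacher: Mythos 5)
Your argument is correct and is essentially identical to the paper's: the paper likewise combines $N_\beta\ge 1$ (Corollary~\ref{cor:nbeta1}) and $N_{\beta'}\ge 1$ (Corollary~\ref{nbeta1}) for the lower bound, the oddness of $N_\beta+N_{\beta'}=N_\alpha+N_{\alpha'}$ from part~(3) of Lemma~\ref{obs:Nalphabeta} and Corollary~\ref{cor:numalpha}, and the upper bound from Corollary~\ref{nbeta3} to conclude the common value is $3$. Your attributions of each ingredient are accurate (indeed more careful than the paper's own citation, which points to Corollary~\ref{nbeta3} where it means Corollary~\ref{nbeta1}).
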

 \begin{lemma}\label{lem:betaalpha1}
Let $b_x$ be a $\beta'$ vertex in $C'$ and $a_i,a_{i+1},\ldots,a_j$ be the neighbours of $b_x$ in $C$ in clockwise order. 
If $N_{\alpha}\neq 0$ then at least one among $a_i$ and $a_j$ must be an $\alpha$ vertex. 
\end{lemma}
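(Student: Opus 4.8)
The plan is to argue by contradiction. Assume $N_\alpha \neq 0$ and that neither $a_i$ nor $a_j$ is an $\alpha$ vertex. By Observation~\ref{obs:gamma}, the $\beta'$ vertex $b_x$ has exactly two non-$\gamma$ neighbours in $S$, and these are its extreme $C$-neighbours $a_i, a_j$ (the interior fan-neighbours $a_{i+1}, \ldots, a_{j-1}$ are $\gamma$ vertices, each having $b_x$ as its unique $C'$-neighbour); hence $a_i, a_j \in V_\alpha \cup V_{\alpha'}$, and by assumption both are $\alpha'$ vertices. Using Observation~\ref{Obs:nbeta2} ($N_\alpha + N_{\alpha'} = 3$) together with $N_\alpha \geq 1$, the two distinct $\alpha'$ vertices $a_i, a_j$ must constitute all of $V_{\alpha'}$, so $N_{\alpha'} = 2$, $N_\alpha = 1$, and $V_\alpha \cup V_{\alpha'} = \{a_i, a_j, a_m\}$, where $a_m$ is the unique $\alpha$ vertex.

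Next I would pin down the $\beta$-side of the picture. Let $b_p$ (respectively $b_q$) be the extreme $C'$-neighbour of $a_i$ (respectively $a_j$) other than $b_x$. Since $a_i, a_j$ are $\alpha'$ and $b_x$ is a $\beta'$ vertex, Lemma~\ref{lem:alphabeta1} forces both $b_p$ and $b_q$ to be $\beta$ vertices; as $b_x$ is $\beta'$ and $N_\beta + N_{\beta'} = 3$ (Observation~\ref{Obs:nbeta2}), it follows that $V_\beta \cup V_{\beta'} = \{b_x, b_p, b_q\}$ with $b_p, b_q, b_x$ pairwise distinct. Now each of $b_p, b_q$, being a $\beta$ vertex, has degree $4$ and exactly two $C$-neighbours, both non-$\gamma$ and hence in $\{a_i, a_j, a_m\}$, and consecutive on $C$. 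For $b_p$, one such neighbour is $a_i$; its consecutive partner cannot be the $\gamma$ vertex $a_{i+1}$, nor can it be $a_j$ (which is non-adjacent to $a_i$ on $C$), so it must be $a_m = a_{i-1}$; symmetrically $a_m = a_{j+1}$. Consequently $a_m$ is adjacent to both $b_p$ and $b_q$, and being an $\alpha$ vertex it has exactly two $C'$-neighbours, which are consecutive on $C'$; these are therefore precisely $b_p, b_q$, giving $b_p b_q \in E(C')$.

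Finally I would exhibit a smaller odd hole. The $5$-cycle $b_p, a_i, b_x, a_j, b_q$ closed by the edge $b_q b_p$ is chordless: $a_i a_j$ is not an edge (it would be a chord of the hole $C$); $b_p b_x$ and $b_x b_q$ are not chords of $C'$ since all three of these vertices are non-$\delta$ (Lemma~\ref{lem:multiC}); and $b_p a_j$, $a_i b_q$ are non-edges because the only $C$-neighbours of $b_p$ and $b_q$ are $\{a_i, a_m\}$ and $\{a_j, a_m\}$ respectively. Thus it is an odd hole on vertices of $S \cup Int(C)$, distinct from $C$ (it contains the interior vertices $b_p, b_x, b_q$), contradicting the minimality of $C$. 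The main obstacle is that, unlike Lemma~\ref{lem:alphabeta1}, where the shortcut modifies the chordless hole $C$, here the natural shortcut runs along $C'$, which may carry chords, so no chordless replacement can be read off directly; the hypothesis $N_\alpha \neq 0$ and the rigid counts $N_\alpha + N_{\alpha'} = N_\beta + N_{\beta'} = 3$ are exactly what collapse the complementary $C'$-arc to the single edge $b_p b_q$, and the delicate step is justifying $a_m = a_{i-1} = a_{j+1}$ and $b_p b_q \in E(C')$ from the degree-$4$ constraints before verifying chordlessness.
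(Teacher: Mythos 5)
Your proof is correct and follows essentially the same route as the paper's: assume both extreme $C$-neighbours of $b_x$ are $\alpha'$, use $N_\alpha+N_{\alpha'}=3$ to locate the unique $\alpha$ vertex, show that the other extreme $C'$-neighbours of $a_i$ and $a_j$ are joined by an edge of $C'$, and exhibit the chordless $5$-cycle $a_ib_xa_jb_qb_p$ as an odd hole inside $C$, contradicting minimality. The only cosmetic difference is that the paper obtains the edge $b_pb_q$ directly (both vertices must be $C'$-neighbours of the $\alpha$ vertex, which has exactly two, consecutive, such neighbours) without first invoking Lemma~\ref{lem:alphabeta1} to classify $b_p$ and $b_q$ as $\beta$ vertices.
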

\begin{proof}
For the sake of contradiction assume that $a_i$ and $a_j$ are $\alpha'$ vertices (see Figure~\ref{fig5}(b)). 
Let $b_z,b_{z+1}\ldots b_x$ and $b_x,b_{x+1},\ldots,b_y$ be the neighbours of $a_i$ and $a_j$ in $C$ respectively in clockwise order. 
Let $a_k$ be an $\alpha$ vertex in $C$. By Observation~\ref{Obs:nbeta2}, $a_i$, $a_j$ and $a_k$ must be the only vertices in $C$ that are not $\gamma$ vertices.  
Hence $b_y$ and $b_z$ must be neighbours of $a_k$ and $b_yb_z$ must be an edge in $C'$.      

As $a_i$ has two neighbours in $C'$ which are not $\delta$ vertices ($b_x$ and $b_z$), we conclude using part (1) 
of Observation~\ref{obs:gamma} that $b_y$ cannot be a neighbour of $a_i$. Similarly, $b_z$ cannot be a neighbour of $a_j$.  
Moreover, there cannot be a chord between $b_x$ and $b_y$ or between $b_z$ and $b_x$ (by Lemma~\ref{lem:multiC}). 
Consequently, the vertices $a_i,b_x,a_j,b_y, b_z$ should induce an odd hole which contains vertices only in $C$ and $Int(C)$, 
a contradiction. Hence at least one among $a_i$ and $a_j$ must be an $\alpha$ vertex.
\end{proof}
\begin{lemma}
If $N_\alpha=0$, then there exist a $\beta'$ vertex $b_i$ and a $\delta$ vertex $b_x$ satisfying the following:
\begin{itemize}\label{lem:nalpha0}
\item[(a)] $b_i$ is the unique $\beta'$ vertex in $C'$.
\item[(b)]$b_ib_x$ is a chord in $C'$.
\item[(c)]Every vertex in $C$ is adjacent to either $b_i$ or $b_x$.
\end{itemize}
\end{lemma}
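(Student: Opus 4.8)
The plan is to first pin down the coarse structure forced by $N_\alpha=0$, and then locate the required chord by a minimality argument.

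First I would record what $N_\alpha=0$ forces. Together with Observation~\ref{Obs:nbeta2} it gives $N_{\alpha'}=3$ and $N_\beta+N_{\beta'}=3$. Consider the bipartite graph $G''$ on $V_{\alpha'}\cup(V_\beta\cup V_{\beta'})$ from Lemma~\ref{obs:Nalphabeta}: it is simple and $2$-regular on three-plus-three vertices, hence a single $6$-cycle. The two $6$-cycle neighbours of an $\alpha'$ vertex are exactly its two non-$\delta$ (extreme) neighbours in $C'$, so Lemma~\ref{lem:alphabeta1} says each $\alpha'$ vertex has at least one $\beta$ vertex among these two. If there were two $\beta'$ vertices, the $\alpha'$ vertex lying between them on the $6$-cycle would have both neighbours in $V_{\beta'}$, a contradiction. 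Hence $N_{\beta'}=1$, $N_\beta=2$, giving part (a): the $\beta'$ vertex $b_i$ is unique.

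Next I would show $b_i$ misses exactly one vertex of $C$. By Observation~\ref{obs:gamma}(2) every $\gamma$ vertex has its unique $T$-neighbour equal to a $\beta'$ vertex, hence to $b_i$; and the two non-$\gamma$ neighbours of $b_i$ in $S$ are its two $6$-cycle neighbours, both $\alpha'$. Thus the consecutive run $N(b_i)\cap S$ is all $\gamma$ vertices plus exactly two of the three $\alpha'$ vertices, so it misses exactly the third $\alpha'$ vertex, which I call $a_m$, with $C$-neighbours $a_{m-1},a_{m+1}$. Using Observation~\ref{obs:neighbour}(3), a $\gamma$ neighbour of $a_m$ would force $a_m\in N(b_i)$, so $a_{m-1},a_{m+1}$ are the other two $\alpha'$ vertices; the three $\alpha'$ vertices are consecutive and $N(b_i)\cap S=S\setminus\{a_m\}$. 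Let $b_p,b_q$ be the two $\beta$ vertices; these are the extreme neighbours of $a_m$ in $C'$, so $a_mb_p,a_mb_q,a_{m-1}b_p,a_{m+1}b_q\in E$, and between $b_p$ and $b_q$ on $C'$ lie the $\delta$-neighbours $e_1,\dots,e_k$ of $a_m$, all adjacent to $a_m$.

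The heart of the proof is the chord of part (b). I would consider the cycle $Z=b_i\,a_{m-1}\,b_p\,e_1\,e_2\cdots e_k\,b_q\,a_{m+1}\,b_i$, whose vertices all lie in $S\cup Int(C)$. Since $a_m$ is $\alpha'$, $deg_{G'}(a_m)=k+4$ is even by Lemma~\ref{evenC}, so $k$ is even and $Z$ has odd length $k+5\ge 7$. By minimality of $C$, $Z$ is not an induced odd hole, so it has a chord, and I would rule out all chords but those of the form $b_ie_j$. The joins $b_ib_p,b_ib_q,b_pb_q$ pair two non-$\delta$ vertices and are excluded by Lemma~\ref{lem:multiC}; $e_ie_j$ is excluded by Lemma~\ref{lem:gammachord} (common $S$-neighbour $a_m$); $a_{m-1}a_{m+1}$ would be a separating triangle with $a_m$; and edges like $a_{m-1}e_j,a_{m-1}b_q,a_{m+1}b_p$ are impossible since $\delta$-vertices have a unique $S$-neighbour and each $\beta$ vertex has exactly the two $S$-neighbours listed above. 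The only serious candidates are $b_pe_j$ and $b_qe_j$, and here I would invoke the absence of separating triangles: if $b_pe_j$ were a chord with $j\ge 2$, then, as $a_m$ is adjacent to $b_p,e_1,\dots,e_j$, the triangle $a_mb_pe_j$ would enclose the nonempty set $\{e_1,\dots,e_{j-1}\}$ while other vertices stay outside, a separating triangle. So no such chord exists, and symmetrically none of type $b_qe_j$; the chord of $Z$ is $b_ie_j$. Taking $b_x=e_j$ yields a $\delta$ vertex with $b_ib_x$ a chord (part (b)) whose unique $S$-neighbour is $a_m$; as $b_i$ covers $S\setminus\{a_m\}$, every vertex of $C$ is adjacent to $b_i$ or $b_x$ (part (c)).

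The step I expect to be the main obstacle is exactly this chord classification, specifically eliminating the ``short'' diagonals $b_pe_j,b_qe_j$ from the $\beta$ vertices to the interior $\delta$-vertices of $a_m$. My first instinct would be to shortcut $Z$ repeatedly along such diagonals and track parities of the resulting subcycles, which seems to lead into a delicate and possibly inconclusive case analysis; the clean resolution is to observe that any such diagonal produces a separating triangle through $a_m$, so that the no-separating-triangle hypothesis, rather than a parity count, does the decisive work.
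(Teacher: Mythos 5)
Your proposal is correct and follows essentially the same route as the paper: the same elimination of $N_{\beta'}=2$ via Lemma~\ref{lem:alphabeta1}, the same odd cycle through $b_i$, the two $\alpha'$ extremes of its run, and the neighbour run of the missed $\alpha'$ vertex $a_m$ (the paper's $C''$ is exactly your $Z$), and the same chord analysis forcing a $b_i$-to-$\delta$ chord. The only cosmetic difference is that you re-derive the exclusion of the diagonals $b_pe_j$, $b_qe_j$ from separating triangles, whereas the paper disposes of all chords among the neighbours of $a_m$ in one stroke via Observation~\ref{Obs:triangle}.
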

\begin{proof}
\begin{itemize}
\item[(a)] By Corollary~\ref{nbeta1}, there exists at least one $\beta'$ vertex in $C'$. 
Also by Corollary~\ref{cor:nbeta1}, there exists at least one $\beta$ vertex in $C'$. 
As $N_\beta + N_{\beta'}=3$ (by Observation~\ref{Obs:nbeta2}), it is enough to prove that $N_{\beta'} \neq 2$. 
Assume that $N_{\beta'} = 2$. Let $b_i$ and $b_j$ be two consecutive $\beta'$ vertices in clockwise order in $C$. 
Then  $b_i$ and $b_j$ must have a common neighbour (say $a_k$) in $C$. As $N_\alpha=0$ and $N_\alpha+N_{\alpha'}>0$, 
$a_k$ must be an $\alpha'$ vertex. But this is not possible by Lemma~\ref{lem:alphabeta1}. Therefore, $N_{\beta'} = 1$.
\item[(b)]  By Observation~\ref{Obs:nbeta2} and part (a), $N_\beta=2$. Let $b_j$ and $b_k$ be the $\beta$ vertices and $b_i$ be the $\beta'$ vertex in $C'$. 
Let $a_p,a_{p+1},\ldots,a_q$ ($q \geq p+3$) be the neighbours of $b_i$ in $C$ arranged in clockwise order (see figure~\ref{fig6}(a)). 
Let $a_q$ and $a_r$ be the neighbours of $b_j$ and let $a_r$ and $a_p$ be the neighbours of $b_k$ in $C$ in clockwise order. Note that $a_p,a_q$ and $a_r$ are 
$\alpha'$ vertices by assumption. Hence  no two vertices from the set $\{b_i, b_j, b_k\}$ are consecutive on $C'$. Note that $b_j,b_{j+1},\ldots,b_k$ are  the neighbours of $a_r$ in $C'$ Then the vertices 
$a_p,b_i,a_q,b_j,b_{j+1},\ldots,b_k,a_p$ forms an 
odd cycle (say $C''$).
Since $C''$ contains vertices only in $C$ and $Int(C)$, it cannot be an odd hole.
Hence, there must be at least one chord inside $C''$. However, there is no chord between any two vertices in $\{b_i,b_j,b_k\}$ 
and between any two vertices in $\{b_j,b_{j+1},\ldots,b_k\}$ (by Observation~\ref{Obs:triangle}).  
Hence the only possibility for the chord is $b_ib_x$ such that $x\in\{j+1,\ldots,k-1\}$ (see Figure~\ref{fig6}(b)). 
\item[(c)] By Part (a), since $b_i$ is the unique $\beta'$ vertex in $C'$, every $\gamma$ vertex in $C$ must be adjacent to $b_i$ (part 2 of Observation~\ref{obs:gamma}). Moreover as $N_\alpha=0$, $b_i$ must have two $\alpha'$ vertices (say $a_p$ and $a_q$) in $C$ as neighbours. Consequently, as $N_{\alpha'}=3$ (Observation~\ref{Obs:nbeta2}), the only one vertex that is not a neighbour of $b_i$ in $C$ is an $\alpha'$ vertex, say $a_r$ (see Figure~\ref{fig6}(b)). By part (b) there exists a chord $b_ib_x$ between the $\beta'$ vertex $b_i$ and a $\delta$ vertex $b_x$ in $C'$. As $b_x$ is a $\delta$ vertex, it has exactly one neighbour on $C$. This neighbour cannot be $a_p$ (or $a_q$) as otherwise the vertices $b_i,b_x$ and $a_p$ (respectively $b_i,b_x$ and $a_q$) will form a separating triangle in $G$. Hence $a_r$ is the neighbour of $b_x$ in $C$.
\end{itemize}
\end{proof}
\begin{Corollary}\label{cor:alpha0}
If $N_\alpha=0$, then there exist a chord $b_ib_x$ in $C'$ such that the boundary of the exterior face of the local neighbourhood of $b_ib_x$ is the odd hole $C$.
\end{Corollary}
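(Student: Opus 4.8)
The plan is to obtain Corollary~\ref{cor:alpha0} as an almost immediate consequence of Lemma~\ref{lem:nalpha0}, which already supplies the required chord. That lemma guarantees a $\beta'$ vertex $b_i$ and a $\delta$ vertex $b_x$ with $b_ib_x$ a chord of $C'$ (part (b)), such that every vertex of $C$ is adjacent to $b_i$ or to $b_x$ (part (c)). Setting $X=\{b_i,b_x\}$, so that the local neighbourhood in question is $G[N[X]]$, I would show that the boundary of its exterior face is exactly the odd hole $C$.

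First I would record that $C$ is present as a subgraph of $G[N[X]]$. By part (c) of Lemma~\ref{lem:nalpha0}, every vertex $a_m\in S$ lies in $N[b_i]\cup N[b_x]\subseteq N[X]$, and since each edge $a_ma_{m+1}$ of $C$ joins two vertices of $N[X]$, every such edge survives in the induced subgraph $G[N[X]]$. Hence $C$ is a cycle of $G[N[X]]$, and because $C$ is an odd hole it is chordless.

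The second ingredient is a planarity observation locating all remaining vertices of $N[X]$. Since $b_i,b_x\in Int(C)$ and the cycle $C$ separates the plane, no edge of $G$ can join a vertex of $Int(C)$ to a vertex of $Ext(C)$; therefore every neighbour of $b_i$ and of $b_x$ lies in $C\cup Int(C)$. Consequently $N[X]\subseteq C\cup Int(C)$, so apart from the vertices of $C$ themselves, all vertices of $G[N[X]]$ (namely $b_i$, $b_x$, their neighbours on $C'$, and any of their neighbours interior to $H$) lie strictly inside the region bounded by $C$.

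Putting these together, in the inherited plane embedding of $G[N[X]]$ the cycle $C$ encloses every other vertex of the subgraph, so $C$ bounds the exterior face, and since $C$ is chordless the boundary walk of that face is precisely the odd hole $C$, as claimed. I expect no serious obstacle: the only point requiring care is that the outer boundary is exactly $C$ rather than some shorter enclosing walk, and this is settled by combining the chordlessness of the induced odd hole $C$ with the fact, from part (c), that every $a_m$ genuinely appears in $N[X]$.
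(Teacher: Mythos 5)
Your proposal is correct and follows the same route as the paper, which states Corollary~\ref{cor:alpha0} as an immediate consequence of Lemma~\ref{lem:nalpha0} without further argument; you have simply made explicit the two implicit steps (part~(c) puts all of $C$ into $N[\{b_i,b_x\}]$, and planarity confines $N[\{b_i,b_x\}]$ to $C\cup Int(C)$, so the chordless cycle $C$ is the exterior boundary). No gaps.
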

Note that Corollary~\ref{cor:alpha0} gives a local characterization for the odd hole $C$ when
$N_{\alpha}=0$. That is, there exists an edge in $G$, the exterior boundary of its local neighbourhood is the hole $C$.   
Our goal is to obtain a similar local characterization when $N_{\alpha}>0$.     

\begin{figure}[!htb]
\centering
 \includegraphics[scale=0.70]{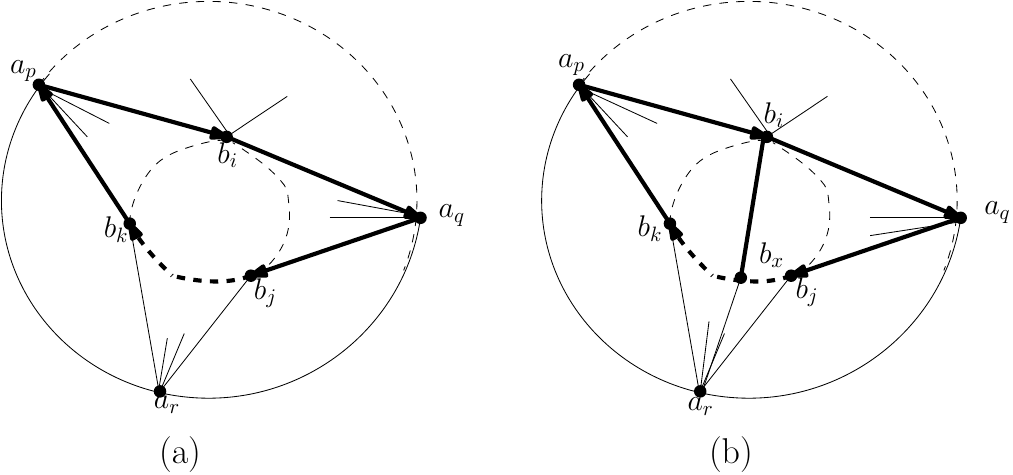}
\caption{$N_{\alpha'}=3$ }
\label{fig6}
\end{figure}

For the rest of the paper, we use the following notation.
Let $A$ be the set of all vertices in $C$ of type $\alpha$ or $\alpha'$ and
let $B$ be the set of all vertices in $C'$ of type $\beta$ or $\beta'$.  
By Observation~\ref{Obs:nbeta2}, $|A|=|B|=3$.   
Let $a_i,a_j$ and $a_k$ (respectively $b_x,b_y$ and $b_z$) be the vertices of $A$ (respectively $B$) listed in clockwise order in $G'$. 
There exists at least one $\beta$ vertex in $B$ (by Corollary~\ref{cor:nbeta1}) and at least one $\alpha'$ vertex in $A$ (by Corollary~\ref{cor:nalpha1}). 
We fix $b_y$ to be a $\beta$ vertex in $B$  and $a_i$ to be an $\alpha'$ vertex in $A$. 
The next lemma shows that every vertex in $C$ must be a neighbour of either $b_x$ or $b_z$ (or both).  
\begin{lemma}\label{lem:betaminus}
Let $b_y\in B$ be a $\beta$ vertex.  Then the boundary of the exterior 
face of the subgraph induced by the closed neighbourhood of $B\setminus \{b_y\}$
 in $G$ is the odd hole $C$.   
\end{lemma}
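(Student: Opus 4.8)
The plan is to reduce Lemma~\ref{lem:betaminus} to the single combinatorial assertion that \emph{every vertex of $C$ is a neighbour of $b_x$ or of $b_z$}, and then to translate this into the claim about the exterior face. Recall that by Lemma~\ref{evenC} every vertex of $S=V(C)$ has $G'$-degree $3$, $4$, or an even number at least $6$, so $V(C)$ is the disjoint union of the $\gamma$ vertices and the three vertices of $A$. I would verify the adjacency assertion for these two types of vertices separately.

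For a $\gamma$ vertex, part (2) of Observation~\ref{obs:gamma} tells us its unique neighbour in $C'$ is a $\beta'$ vertex, and hence lies in $B$. Since we have fixed $b_y$ to be a $\beta$ vertex rather than a $\beta'$ vertex, this neighbour must be $b_x$ or $b_z$. Thus every $\gamma$ vertex is adjacent to some vertex of $B\setminus\{b_y\}$.

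For the vertices of $A$, I would use that the bipartite graph between $A$ and $B$ is $2$-regular, as shown in the proof of Lemma~\ref{obs:Nalphabeta}, together with $|A|=|B|=3$ (Observation~\ref{Obs:nbeta2}). Each vertex of $A$ is then adjacent to exactly two of the three vertices of $B$, so it fails to be adjacent to exactly one of them. If such a vertex were adjacent to neither $b_x$ nor $b_z$, it could be adjacent to at most $b_y$, contradicting $2$-regularity; hence every vertex of $A$ is adjacent to $b_x$ or $b_z$. Combining the two cases establishes the combinatorial assertion.

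It remains to deduce the statement about the exterior face, which I expect to be the only subtle point. Since $b_x,b_z\in V(C')$, they lie strictly inside the cycle $C$ in the plane embedding of $G$; as $C$ separates its interior from its exterior, no edge leaves $Int(C)$ across $C$, so $N[\{b_x,b_z\}]\subseteq V(C)\cup Int(C)$. The combinatorial assertion gives $V(C)\subseteq N[\{b_x,b_z\}]$, so $G[N[\{b_x,b_z\}]]$ consists of the whole cycle $C$ together with vertices lying strictly inside it. Because $C$ is chordless in $G$ (it is an odd hole), it is an induced cycle of this subgraph, and since it encloses all remaining vertices, the boundary of the exterior face of $G[N[\{b_x,b_z\}]]$ is precisely $C$. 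The main obstacle is making this final topological step rigorous, i.e.\ arguing carefully from the inherited plane embedding that $C$ remains the outer boundary and that no interior vertex or chord disturbs this.
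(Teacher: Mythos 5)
Your proof is correct, and it reduces the lemma to the same key claim as the paper does --- that every vertex of $C$ is adjacent to $b_x$ or $b_z$ --- but you establish that claim by a different route. The paper argues locally around $b_y$: since $b_y$ is a $\beta$ vertex its two $C$-neighbours $a_i,a_{i+1}$ are adjacent, they cannot share a second common neighbour in $B$ (that would create a separating triangle enclosing $b_y$), so by Observation~\ref{obs:gamma} one of them is adjacent to $b_x$ and the other to $b_z$; combined with part (1) of Lemma~\ref{obs:Nalphabeta} this covers all of $C$. You instead partition $V(C)$ into the $\gamma$ vertices (whose unique $T$-neighbour is a $\beta'$ vertex, hence not $b_y$) and the three vertices of $A$ (each adjacent to exactly two of the three vertices of $B$ by the $2$-regularity of the $A$--$B$ bipartite graph, hence to at least one of $b_x,b_z$). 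Your version is slightly cleaner in that it avoids the separating-triangle argument and any case analysis on which of $b_x,b_z$ attaches where; the paper's version has the minor advantage of yielding the explicit picture (Figure~\ref{fig7}(a)) that is reused in the proof of Lemma~\ref{lem:onebeta}. Your final topological step --- $N[\{b_x,b_z\}]\subseteq V(C)\cup Int(C)$ by planarity, $V(C)\subseteq N[\{b_x,b_z\}]$ by the combinatorial claim, and $C$ chordless and enclosing everything else, hence the exterior boundary --- is exactly the justification the paper compresses into its closing sentence, and it is sound; no gap there.
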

\begin{proof}
Since $b_y$ is a $\beta$ 
vertex, $b_y$ must have exactly two neighbours in $C$ - say $a_i$ and $a_j$ 
in clockwise order. Further, we have $j=i+1$ and $a_ia_j$ will be an edge in $C$.  
(See Figure~\ref{fig7}(a)).    
Moreover, $a_i$ and $a_j$ cannot have another common neighbour $b_t$
for any $t\in \{x,z\}$ as otherwise $b_t,a_i$ and $a_j$ will form a separating
triangle with $b_y$ in the interior.  As each of $a_i$ and $a_j$ 
must have a neighbour in $B$ distinct from $b_y$ (by Observation~\ref{obs:gamma}), 
we may assume without loss of generality that $b_x$ is a neighbour
of $a_i$ and $b_z$ is a neighbour of $a_j$.  Thus, each neighbour of $b_y$ in $C$
is either a neighbour of $b_x$ or a neighbour of $b_z$.  
Since every vertex in $C$ must be
a neighbour $b_x$ or $b_y$ or $b_z$ (Lemma~\ref{obs:Nalphabeta}), it follows that every
vertex in $C$ is a neighbour of $b_x$ or $b_z$.  The lemma follows since 
$C$ is assumed to be the boundary of the exterior face of the subgraph induced 
by the closed neighbourhood of $B$.      
\end{proof}
Since Corollary~\ref{nbeta1} guarantees that $C'$ contains a $\beta$ vertex, 
we now have a local characterization for the odd hole $C$
in the sense that $C$ will be the exterior boundary of the local neighbourhood of the two element set 
$\{b_x,b_z\}$. We now strengthen Lemma~\ref{lem:betaminus} by showing that there exists 
an edge in $G$ (connecting two vertices in $C'$) whose local neighbourhood has $C$ as 
its exterior boundary.  
\begin{lemma}\label{lem:onebeta}
 There exists two vertices $b_p$ and $b_q$ in $C'$ such that:
 \begin{itemize}
 \item  $b_pb_q$ is an edge  in $G$.
 \item The boundary of the exterior face of the closed neighbourhood of $b_pb_q$ is the odd hole $C$.
 \end{itemize}
\end{lemma}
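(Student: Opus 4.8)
The plan is to reduce the statement to a purely combinatorial covering condition and then split on whether $N_\alpha=0$. First I would record the following reformulation, in the spirit of the last line of the proof of Lemma~\ref{lem:betaminus}: since any two vertices $b_p,b_q$ of $C'$ lie in $Int(C)$, their closed neighbourhood is contained in $V(C)\cup Int(C)$, and $C$ is an induced (chordless) cycle of $G$. Hence the boundary of the exterior face of $G[N[\{b_p,b_q\}]]$ is the odd hole $C$ if and only if every vertex of $C$ is adjacent to $b_p$ or to $b_q$. Thus it suffices to exhibit an edge $b_pb_q$ of $G$ with $b_p,b_q\in C'$ whose two endpoints together are adjacent to every vertex of $C$.

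When $N_\alpha=0$ this is immediate from Corollary~\ref{cor:alpha0}, which already supplies a chord $b_ib_x$ of $C'$ whose local neighbourhood has $C$ as its exterior boundary. So assume $N_\alpha>0$ and fix an $\alpha$ vertex $a^{*}$. By definition $deg_{G'}(a^{*})=4$, with exactly two neighbours on $C$; by part~(1) of Observation~\ref{obs:gamma} its remaining two neighbours lie in $T$ and are not $\delta$ vertices, hence both belong to $B$, and by part~(1) of Observation~\ref{obs:neighbour} they are consecutive on $C'$ and therefore adjacent. Write them as $b_p,b_q$ and let $b_r$ be the third vertex of $B$. If $b_r$ is a $\beta$ vertex we are done: $\{b_p,b_q\}=B\setminus\{b_r\}$ is adjacent to every vertex of $C$ by Lemma~\ref{lem:betaminus}, and $b_pb_q$ is the required edge.

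It remains to show that $a^{*}$ can be chosen so that $b_r\in V_\beta$. Suppose $b_r\in V_{\beta'}$. Since $N_\alpha>0$, Lemma~\ref{lem:betaalpha1} forces an extreme $C$-neighbour of $b_r$ to be an $\alpha$ vertex; as $a^{*}$ is not adjacent to $b_r$, this $\alpha$ vertex is a different one, so $N_\alpha=2$ (recall $N_\alpha+N_{\alpha'}=3$ by Observation~\ref{Obs:nbeta2} and $N_{\alpha'}\ge 1$ by Corollary~\ref{cor:nalpha1}, whence $N_\alpha\le 2$). Applying this argument to both $\alpha$ vertices, the only configuration that survives is the one in which the two excluded vertices are the two $\beta'$ vertices of $B$ and the unique $\beta$ vertex $b_s$ is the common $C'$-neighbour of both $\alpha$ vertices; then $b_s$, being a $\beta$ vertex with exactly two neighbours on $C$, has precisely the two $\alpha$ vertices as its $C$-neighbours. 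But applying Lemma~\ref{lem:alphabeta1} to the $\alpha'$ vertex $a_i$ shows that some extreme $C'$-neighbour of $a_i$ is a $\beta$ vertex, necessarily $b_s$, which would make $a_i$ one of the two $\alpha$ vertices, a contradiction. Hence some $\alpha$ vertex has $b_r\in V_\beta$, and the proof is complete. The delicate point, and the step I expect to cost the most care, is exactly this last exclusion when $N_\alpha=2$: a single choice of $\alpha$ vertex can fail, and ruling out the bad configuration requires combining Lemma~\ref{lem:betaalpha1} (which controls the extreme $C$-neighbours of a $\beta'$ vertex) with Lemma~\ref{lem:alphabeta1} (which forces a $\beta$ vertex into the fan of the $\alpha'$ vertex $a_i$).
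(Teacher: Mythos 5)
Your proposal is correct and relies on the same toolkit as the paper (Corollary~\ref{cor:alpha0} for the case $N_\alpha=0$, then Lemmas~\ref{lem:alphabeta1}, \ref{lem:betaalpha1} and \ref{lem:betaminus}), but the case analysis is organized differently. The paper fixes the $\beta$ vertex $b_y$ and splits on whether the other two members of $B$, namely $b_x$ and $b_z$, are adjacent on $C'$: if they are, Lemma~\ref{lem:betaminus} applies to the edge $b_xb_z$ directly; if not, their common neighbour on $C$ is an $\alpha'$ vertex, Lemma~\ref{lem:alphabeta1} makes one of them (say $b_x$) a $\beta$ vertex, Lemma~\ref{lem:betaalpha1} applied to the $\beta'$ vertex $b_z$ forces the common neighbour of $b_z$ and $b_y$ on $C$ to be an $\alpha$ vertex, whence $b_zb_y\in E(C')$ and Lemma~\ref{lem:betaminus} finishes. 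You instead anchor the argument at an $\alpha$ vertex, whose two neighbours in $B$ automatically form an edge of $C'$ by part (1) of Observation~\ref{obs:neighbour} and part (1) of Observation~\ref{obs:gamma}, and you show that some $\alpha$ vertex can be chosen so that the excluded third member of $B$ is a $\beta$ vertex. Both routes reach the same conclusion; yours produces the required edge immediately, while the paper must argue separately for the adjacency $b_zb_y$, but your route pays for this with a longer elimination argument at the end.

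One step you should spell out: the claim that ``the only configuration that survives'' has the two excluded vertices equal to two \emph{distinct} $\beta'$ vertices. You are implicitly ruling out the possibility that the two $\alpha$ vertices have the same excluded vertex, i.e., the same pair of neighbours in $B$. This cannot happen, but it needs the $2$-regularity of the bipartite graph between $V_\alpha\cup V_{\alpha'}$ and $V_\beta\cup V_{\beta'}$ from part (3) of Lemma~\ref{obs:Nalphabeta}: if two members of $B$ each had both $\alpha$ vertices as their two neighbours in $A$, the remaining $\alpha'$ vertex $a_i$ would be unable to have two distinct neighbours in $B$, contradicting part (1) of Observation~\ref{obs:gamma}. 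With that inserted, $N_{\beta'}\ge 2$, hence $N_\beta=1$ by Corollary~\ref{cor:nbeta1} and Observation~\ref{Obs:nbeta2}, and your final contradiction via Lemma~\ref{lem:alphabeta1} applied to $a_i$ goes through.
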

\begin{proof}
If $N_\alpha=0$, then the result follows by Corollary~\ref{cor:alpha0}. Hence, assume that $N_\alpha>0$.

Recall that we have assumed $b_y$ to be a $\beta$ vertex.  
Therefore, if $b_x$ and $b_z$ are adjacent in $C'$, setting $b_pb_q=b_x b_z$ suffices (by Lemma~\ref{lem:betaminus}). 

If $b_x b_z$ are not adjacent in $C'$,
the common neighbour of $b_x$ and $b_z$ on $C$ must be an $\alpha'$ vertex. By Lemma~\ref{lem:alphabeta1},
either $b_x$ or $b_z$ is a $\beta$ vertex and by Corollary~\ref{nbeta1}, at least one of them must be a $\beta'$ vertex.
Hence, without loss of generality, we may assume that $b_x$ is a $\beta$ vertex and $b_z$ is a $\beta'$ vertex. 
By Lemma~\ref{lem:betaalpha1}, at least one of the neighbours of $b_z$ must be an $\alpha$ vertex. Therefore,
the common neighbour of $b_z$ and $b_y$ in $C$ must be an $\alpha$ vertex. This implies that $b_z b_y$ is an
edge of $C'$ (see Figure~\ref{fig7}(b)). Since $b_x$ is a $\beta$ vertex, every neighbour of $b_x$ in $C$ must be a neighbour of $b_y$ or $b_z$. Hence, by Lemma~\ref{lem:betaminus}, setting $b_pb_q=b_z b_y$ 
suffices to complete the proof. 
\end{proof}
\begin{figure}
\centering
 \includegraphics[scale=0.75]{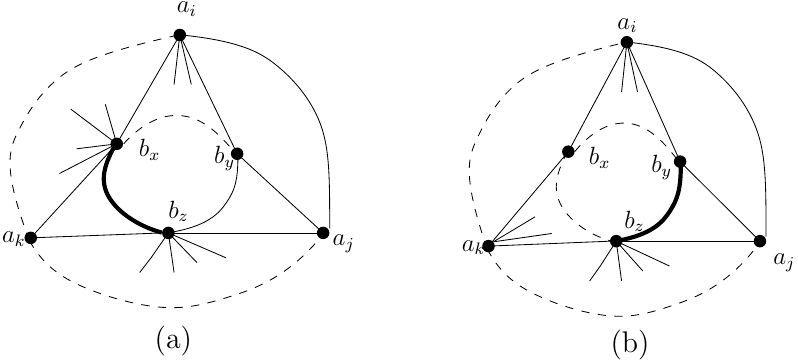}
\caption{ (a)$N_{\beta}=1$, $N_{\alpha'}=1$ (b)$N_{\beta}=2$, $N_{\alpha'}=2$ and $b_xb_z$ is not an edge in $C'$}
\label{fig7}
\end{figure}
Let us now consolidate our observations so far. 
Recall that our objective was to prove Theorem~\ref{thm:main-thm}, to obtain 
a local characterization for a plane near-triangulation to be perfect. Since the graph can be decomposed into induced $2$-connected subgraphs containing 
no separating triangles or edge separators without affecting the perfectness of the graph, it was sufficient to limit the attention to  
W-triangulations.   We noted that if any internal vertex $x$ in a W-triangulation has on odd degree, 
the result was immediate because the exterior face of the closed neighbourhood of $x$ would have
been an odd hole. Consequently, the core of the problem was to characterize odd holes in even W-near-triangulated (induced) 
subgraphs of the original plane near-triangulation. If an even W-near-triangulation $G$ is non-perfect, we considered a \textit{minimal odd hole} $C$ 
in $G$ such that there is no other odd hole in $C \cup Int(C)$.
Then we considered the cycle $C'$ forming the boundary of the subgraph $H$ induced by the vertices
in $Int(C)$. We showed that either (i) $C'$ is a (non-separating) triangle, the exterior face of the closed neighbourhood of which is the odd hole 
$C$ or (ii) there exists an edge $b_pb_q$ connecting vertices in $C'$ such that the boundary of the exterior face of the 
closed neighbourhood of the edge $b_pb_q$ is the odd hole $C$ (Lemma~\ref{lem:onebeta}). Thus we have:  
\begin{Corollary}\label{evenwtriangulation}
A W-triangulation $G$ is not-perfect if and only if $G$ contains at least one among the following:
 \begin{itemize}
 \item a vertex $x$ (of odd degree), the exterior boundary of the local neighbourhood of which, is an odd hole.
 \item an edge $xy$, the exterior boundary of the local neighbourhood of which, is an odd hole.
 \item a facial triangle $xyz$, the exterior boundary of the local neighbourhood of which, is an odd hole.
 \end{itemize}
\end{Corollary}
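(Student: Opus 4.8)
The plan is to prove the two implications separately, regarding this corollary chiefly as the consolidation of the analysis carried out above. For the sufficiency, I would argue that if $G$ contains any one of the three listed configurations, then by hypothesis the exterior boundary of the corresponding local neighbourhood is a chordless odd cycle of length at least five, and this cycle occurs as an induced subgraph of $G$. Since an odd hole is not perfect and perfectness passes to induced subgraphs, the presence of any of the three configurations immediately forces $G$ to be non-perfect.

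For necessity I would assume $G$ is non-perfect. Because $G$ is planar, \citet{tucker1973strong} guarantees that $G$ contains an induced odd hole, and I would split into two cases according to the internal degrees of $G$. If $G$ possesses an internal vertex $x$ of odd degree, then, as $G$ is a W-triangulation, the closed neighbourhood $N[x]$ induces a wheel centred at $x$; its rim is the exterior boundary of $G[N[x]]$ and hence chordless, while the odd degree of $x$ makes this rim an odd cycle. As recorded in the introduction, the rim is therefore an odd hole, the first configuration is present, and this case is complete.

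In the complementary case every internal vertex has even degree, so $G$ is precisely an even W-triangulation and the machinery of this section becomes available. Here I would fix a minimal odd hole $C$ (one exists since the interiors of odd holes form a finite nested family) and set $H=G[Int(C)]$, which by Lemma~\ref{lem:H-connected} is a $2$-connected near-triangulation on at least three vertices; let $C'$ be the boundary of its exterior face. If $C'$ is a triangle then, as $G$ has no separating triangle, it is a facial triangle whose local neighbourhood has $C$ as its exterior boundary, giving the third configuration. If $C'$ is not a triangle, then Lemma~\ref{lem:onebeta} supplies an edge $b_pb_q$ joining two vertices of $C'$ whose closed neighbourhood has exterior boundary exactly $C$, giving the second configuration.

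The genuine difficulty has already been absorbed into Lemma~\ref{lem:onebeta} and its supporting lemmas, which upgrade the vertex pair $\{b_x,b_z\}$ of Lemma~\ref{lem:betaminus} to a single \emph{edge} realising $C$ as its local-neighbourhood boundary; the remaining work in this corollary is only the bookkeeping that reduces an arbitrary W-triangulation to the even case and the check that the hole exhibited in each branch is literally the starting hole $C$, and hence a genuine induced odd hole of $G$. That last verification is immediate from the definitions of local neighbourhood and of exterior boundary.
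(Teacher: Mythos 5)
Your proof is correct and takes essentially the same route as the paper: an odd-degree internal vertex yields the wheel whose rim is an odd hole, and otherwise the minimal odd hole $C$, the boundary $C'$ of $G[Int(C)]$, and the triangle-versus-non-triangle dichotomy (the latter resolved by Lemma~\ref{lem:onebeta}) produce the facial triangle or edge configuration, exactly as in the paper's consolidating argument. The only difference is that you make explicit the routine sufficiency direction (odd hole as induced subgraph implies non-perfect), which the paper leaves implicit.
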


From this, Theorem~\ref{thm:main-thm} is immediate. 

We make note of a few details about Theorem~\ref{thm:main-thm} which are of significance while translating
Theorem~\ref{thm:main-thm} into an algorithm for checking whether a plane near-triangulation is perfect.      
Let $G$ be a plane near-triangulation that is not perfect.  Let  $X=xyz$ be a facial triangle in an induced even W-near-triangulated 
subgraph $G'$ of $G$, such that exterior boundary of the local neighbourhood of $X$ induces an odd hole in $G$.  
It must be noted that $X$ need not necessarily be a facial triangle in the original graph $G$, but could be a separating triangle in $G$.
Hence, identification of the separating triangles in $G$ becomes a significant algorithmic consideration.  It turns out this task
is easy due to the algorithm by \citet{kant1997more} that identifies the separating triangles and the  W-triangulated
subgraphs of $G$ in linear time.  

The second point to note is the following.  We \textit{cannot} conclude from Theorem~\ref{thm:main-thm} that in a non-perfect near-triangulation 
$G$, we can find an induced subgraph $X$ which is either a vertex, an edge or a triangle, such that the neighbours of $X$ induces an odd
hole in $G$.  That is, it is necessary to inspect the \textit{exterior boundary} of the local neighbourhood of each vertex, edge and face 
to detect an odd hole.  Figure~\ref{fig9} gives an example of a graph illustrating this fact.  In this example, the exterior 
boundary of the local neighbourhood of the edge $ip$ induces the odd hole $abcde$.  However, the subgraph induced by the neighbours 
of the edge $ip$ (or any  other edge or face) does not induce an odd hole.  The requirement to check the exterior boundary of 
the local neighbourhood of each edge and each face results in quadratic time complexity for the algorithm described in the next section.
If a W-near-triangulation $G$ does not contain any induced wheel on five vertices,
then it is known that $G$ is not perfect if and only if 
it contains a vertex or a face whose neighbours induce an 
odd hole~\cite{SalamCWKS19}.   
Consequently, for this restricted class of graphs, 
checking perfectness requires only sub-quadratic time.  
\begin{figure}
\centering
 \includegraphics[scale=0.7]{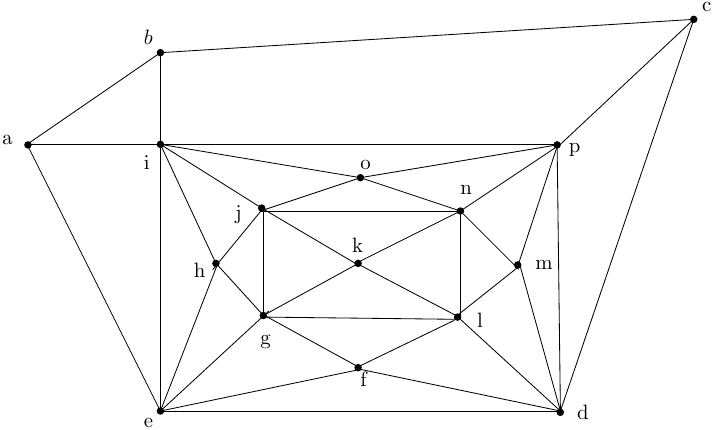}
\caption{The neighbours of the edge $ip$ does not induce an odd hole though the boundary of its local neighbourhood induces
an odd hole.}
\label{fig9}
\end{figure}

\section{Recognition of perfect plane near-triangulations}\label{perfect plane}
In this section, we describe an $O(n^2)$ algorithm to determine
whether a given plane near-triangulation $G$ of $n$ vertices 
is perfect.  The algorithm essentially checks the conditions of Theorem~\ref{thm:main-thm},
using well known techniques for handling planar graphs.  For the sake of clarity, 
the steps of the algorithm and its analysis are presented in detail.  

Initially, we assume that $G$ is a plane triangulation.  Later, we will
describe how to handle near-triangulations as well.  

Given a plane triangulation $G$, using the linear time algorithm of \citet{kant1997more} for identifying separating triangles in a plane triangulation, 
$G$ can be split into its $4$-connected blocks.  It is well known that a plane 
triangulation is $4$-connected 
if and only if it is free of separating triangles.  Consequently, 
the $4$-connected blocks identified by Kant's algorithm are precisely 
the maximal induced W-triangulations in $G$.  We will call each such $4$-connected block a
{\em W-component} of $G$ \cite{SalamCWKS19}.   We have already seen that $G$ is perfect 
if and only if all its W-components are
perfect. Hence, our task reduces to the problem of finding a quadratic 
time algorithm for 
checking whether a given plane triangulation without separating triangles 
is perfect. We split our task into four stages, as described below.
\subsection{Preprocessing}\label{preprocess}
Let $H$ be a plane triangulation on $n$ vertices, without separating triangles. 
Suppose we are given the adjacency list of $H$ as input. 
We will first compute a planar straight line embedding of $H$ on a $n \times n$ grid, 
using the linear time algorithm of \citet{schnyder1990embedding}. This essentially gives us the $(x, y)$ coordinates of each vertex of $H$.  
Using these coordinates, the slopes of all the edges of $H$ can be computed 
in $O(m)$ time, where $m$ is the number of edges of $H$. Note that,
for a planar graph, $m=O(n)$.  It would be
useful to preprocess the adjacency list of $H$ at this point, so that for each vertex $v$, its neighbours are arranged in the descending order of the 
slopes of edges incident at $v$. This preprocessing can be done in $O(m \log{n})$ time.  
The next three stages involve verifying the conditions of perfectness mentioned in Theorem~\ref{thm:main-thm} one by one. 
\subsection{Checking vertex degrees}\label{vertexdegree}
If any vertex $u$ of $H$ is of odd degree, the open neighbourhood $N_{H}(u)$ induces an odd hole
and $H$ is not perfect. Checking the degrees of all vertices can be easily done in time linear in $n$. 
If every vertex of $H$ is of even degree, we need to check the remaining conditions to verify the perfectness of $H$.  
\subsection{Checking the boundary of the local neighbourhood of each edge} \label{sec:neighbourhoodedge}
The next step is to check if the boundary of the local neighbourhood of any edge of $H$ is an odd hole. 
For each edge $uv$ of $H$, define $S_{uv}=N(u) \cup N(v) \cup \{u, v\}$ and construct an indicator array $A_{uv}$ of length $n$, 
where $A_{uv}[i]=1$ if vertex $v_i \in S_{uv}$ and zero otherwise. The construction of these arrays can be done
in $O(mn)=O(n^2)$ time in total. 

Now, we construct $m$ plane subgraphs of $H$, one corresponding to each edge of $H$. 
The graph $H_{uv}$ will be the induced subgraph of $H$ on the vertex set $S_{uv}$. 
We store the coordinate position information of each vertex of $H_{uv}$, by copying the same from $H$. 
To get the sorted adjacency list of $H_{uv}$, we start with a copy of the sorted adjacency list of $H$. Then, 
mark the lists of vertices not in $S_{uv}$ as deleted. For each vertex $x$ in $S_{uv}$, go through the adjacency list
$x$ in order, and when an edge $xv_i$ such that $A_{uv}[i]=0$ is encountered, then mark the edge as deleted. 
It takes only $O(m+n)$ time for obtaining the sorted adjacency list of $H_{uv}$, in this manner. 
The construction of subgraphs $H_{uv}$ corresponding to each
edge $uv$ of $H$ along with the information mentioned above, takes only $O(m (m+n))=O(n^2)$ time in total. 

Now, we describe a procedure to check if the exterior boundary of an induced subgraph $H_{uv}$ constructed above is an odd hole. 
It is easy to see that the exterior boundary of $H_{uv}$ is an induced cycle. So, it suffices to identify the vertices on the exterior boundary and
check the parity of their count. Since we have coordinates of vertices from a straight line drawing, this is easy. First, identify
the left most vertex of $H_{uv}$ in the straight line embedding. This only involves identifying a vertex of $H_{uv}$ with the smallest $x$ coordinate.
Clearly, this vertex is on the exterior face of $H_{uv}$. Let this vertex be $l$. Recall that the adjacency list of each vertex is 
sorted in the decreasing order of slopes. The edge with the largest slope incident at vertex $l$ must be on the exterior boundary of $H_{uv}$.
The other endpoint of this edge can be identified as the next vertex on the boundary of $H_{uv}$. 
After identifying a new vertex $x$ on the boundary, it is easy to identify the next one.  Suppose W is the vertex identified before $x$. 
If $y$ is the vertex that
succeeds W (in the cyclic order) in the adjacency list of $x$, then $y$ is the next vertex on the boundary of $H_{uv}$. When this procedure encounters the 
initial vertex $l$ again, we would have identified all the vertices on the boundary of $H$. Thus, identifying the exterior 
boundary of $H_{uv}$ and checking whether it
is an odd hole, can be done in $O(m+n)$ time. For checking the boundaries of all subgraphs $H_{uv}$ we constructed, total 
time required is only $O(m (m+n))=O(n^2)$. 

If this check fails to find an odd hole, we have to check the boundary of 
the local neighbourhood of each triangle, as described below.  
\subsection{Checking the boundary of the local neighbourhood of each triangle}\label{sec:neighbourhoodtriangle}
Since $H$ is free of separating triangles, triangles of $H$ are precisely 
its faces.  
A listing of all the faces of $H$ can be done in linear time, 
by traversing the adjacency list of every
vertex once. Checking whether the exterior boundary of the closed neighbourhood of a triangle $uvw$ forms an odd hole can be
done in a way very similar to the method we discussed in the previous subsection. For each triangle $uvw$, we will define
a set $S_{uvw}=N(u) \cup N(v) \cup N(w) \cup \{u, v, w\}$ and an indicator array $A_{uvw}$ for $S_{uvw}$. Then, we can construct
plane subgraphs $H_{uvw}$, the induced subgraph of $H$ on the vertex set $S_{uvw}$. The method of checking whether the boundary of 
$H_{uvw}$ is an odd hole or not, is the same as the method described earlier for $H_{uv}$. 
The number of subgraphs to be processed is the number of faces of $H$, which is linear in $n$. Hence, the time required for
checking the boundaries of each such subgraph is again $O(n^2)$ only.
\subsection{Handling plane near-triangulations}
The method described above for recognizing planar perfect graphs can be 
extended to recognize perfect plane near-triangulations by the simple
modifications described below, without affecting the complexity of the algorithm.   
The strategy is to triangulate the plane near-triangulation; use the algorithm of \citet{kant1997more} to 
identify the W-components of the triangulated graph; and retrieve the W-components of 
the original plane near-triangulation.  

Let $G(V,E)$ be a plane near-triangulation.  If $G$ is not $2$-connected,
in linear time we can find the $2$-connected components of $G$ using depth first search, and work on each component.
Hence, we assume without loss of generality that  $G$ is $2$-connected.   

We can embed $G$ into an $n\times n$ grid using the algorithm of \citet{schnyder1990embedding} 
in $O(n)$ time, preprocess the adjacency list of $G$ as described in Subsection~\ref{preprocess} in $O(m\log n)$ time
and identify the vertices on the boundary of exterior face of $G$ by using the method discussed in 
Subsection~\ref{sec:neighbourhoodedge} in $O(n)$ time. Let $v_1,v_2,\ldots,v_k,v_1$ be the cycle forming the  
the boundary of exterior face of $G$. 
Construct a plane triangulation $G'$ from $G$ by adding a new vertex $v_0$ on the exterior face of $G$ and adding edges from $v_0$ to $v_i$  
for every $i \in \{1,2,\ldots,k\}$. The adjacency list of $G$ can be modified to get the adjacency list of $G'$ in linear time. 
Note that, as $G'$ is a triangulation, it contains no edge separators.
We can use the linear time algorithm of \citet{kant1997more} for identifying separating triangles 
in the plane triangulation $G'$ and decompose $G'$ into W-components.

Note that $v_iv_j$ is an edge separator of $G$ if and only if it is a 
chord connecting two vertices on the external face of $G$, forming a separating triangle
$v_iv_jv_0$ in $G'$. Conversely, a separating triangle in $G'$ containing $v_0$ must be of the form 
$v_0v_iv_j$ for some $i,j\in \{1,2,\ldots, k\}$, with $v_iv_j$ forming a chord connecting two vertices on 
the exterior face of $G$. Moreover,  every separating triangle in $G$ will be a separating triangle in 
$G'$ and every separating triangle in $G'$ that does not contain $v_0$ is a separating triangle in $G$.  
Hence, we have the following observation.    

\begin{Observation}\label{obs:reduction}
Let $G_1',G_2',\ldots , G_r'$ be the W-components of $G'$. Let $G_i=G_i'$ if 
$v_0\notin V(G_i')$ and $G_i=G_i'\setminus\{v_0\}$ otherwise.  Then, 
$G_1,G_2,\ldots ,G_r$ are precisely the W-components of $G$.  
\end{Observation}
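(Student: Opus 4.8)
The plan is to prove the statement by induction, leveraging the classification of separating triangles established in the paragraph immediately preceding it. Recall that every separating triangle of $G'$ is of exactly one of two kinds: either it avoids $v_0$, in which case it is a separating triangle of $G$; or it has the form $v_0v_iv_j$, in which case $v_iv_j$ is a chord joining two vertices on the exterior face of $G$, that is, an edge separator of $G$. I will also use the standard fact that the collection of W-components of a plane triangulation can be produced by choosing any one separating triangle, splitting the triangulation into its interior part and its exterior part (each retaining the triangle), and then taking the union of the W-components of the two parts; in particular the collection does not depend on the order of splitting. The same recursive principle governs $G$, where one splits along both separating triangles and edge separators until no such separators remain.

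For the induction it is convenient to prove the slightly more general statement: for every $2$-connected plane near-triangulation $H$ with exterior boundary $u_1u_2\ldots u_\ell$, if $H'$ is the triangulation obtained by adding an apex $u_0$ joined to all of $u_1,\ldots,u_\ell$, then the W-components of $H$ are obtained from those of $H'$ by deleting $u_0$ wherever it occurs. I would induct on the number of separating triangles of $H'$. In the base case $H'$ has no separating triangle, so it is $4$-connected; then $H$ has neither a separating triangle nor an edge separator, whence $H=H'\setminus\{u_0\}$ is itself a W-triangulation, and both sides of the claim reduce to this single component.

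For the inductive step I would fix a separating triangle $\Delta$ of $H'$ and split along it. If $\Delta$ avoids $u_0$, it is an internal separating triangle of $H$; then $Int_{H'}(\Delta)=Int_H(\Delta)$, so the interior piece is unchanged and $u_0$-free, while the exterior piece is $Ext_H(\Delta)$ together with the triangle and with $u_0$ still joined to the (unchanged) exterior boundary of $H$ — that is, the apex-augmentation of the exterior piece of $H$. If instead $\Delta=u_0u_iu_j$, then deleting $u_0$ from the two pieces yields exactly the two subgraphs into which the edge separator $u_iu_j$ splits $H$, and each piece is again the apex-augmentation of the corresponding piece of $H$. In either case the two pieces of $H'$ are the apex-augmentations of the two pieces of $H$ produced by the matching split, each piece has strictly fewer separating triangles than $H'$, and so the induction hypothesis applies. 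Since the W-components of $H'$ are the union of the W-components of the two pieces, deleting $u_0$ commutes with this union, and by the hypothesis the result is the union of the W-components of the two pieces of $H$, which are precisely the W-components of $H$.

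The main obstacle I anticipate is the planarity bookkeeping needed to verify that each piece of $H'$ is \emph{exactly} the apex-augmentation of the corresponding piece of $H$: that deleting $u_0$ recovers the correct induced subgraph, and that in the surviving piece $u_0$ is joined to precisely the exterior boundary of that piece and to nothing else. This rests on two facts. First, removing the interior of an internal separating triangle leaves the exterior boundary of $H$ intact, so the apex remains attached to the same boundary cycle. Second, the triangle $u_0u_iu_j$ separates the apex-edges running to one boundary arc from those running to the other, so the two sides of $u_0u_iu_j$ are the apex-augmentations of the two sides of the chord $u_iu_j$ in $H$. Once these identifications are secured, the fact that $H$ is an induced subgraph of $H'$ ensures that deleting $u_0$ from an induced piece of $H'$ returns the matching induced piece of $H$, and the induction closes.
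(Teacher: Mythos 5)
Your argument is correct and rests on the same key fact the paper uses, namely the correspondence between separating triangles of $G'$ and the separating triangles and edge separators (outer-face chords) of $G$. The paper states the observation as an immediate consequence of that correspondence without further proof; your induction on the number of separating triangles, tracking that each split piece of $G'$ is the apex-augmentation of the matching piece of $G$, is a careful formalization of essentially the same route.
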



As a consequence of Observation~\ref{obs:reduction}, it suffices
to decompose $G'$ into its W-components and remove the vertex $v_0$ from each W-component to recover the W-components of $G$. 
Removing $v_0$ from all the W-components of $G'$ 
requires only traversing the adjacency lists of all the graphs once
and hence possible in $O(n)$ time. It is easy to verify that 
the $O(n^{2})$ method described earlier for checking perfectness of the W-components
of a plane triangulation suffices for handling the W-components of the plane
near-triangulation $G$ as well.  Thus, it follows that checking the perfectness of
plane near-triangulations requires only $O(n^{2})$ time.  
\section{Acknowledgment}
We thank Sunil Chandran, IISc. Bangalore and Ajit A. Diwan, IIT Bombay 
for discussions and suggestions. We thank the latter also for the example
in Figure~\ref{fig9}.  
\bibliographystyle{plainnat}

\end{document}